\documentclass{amsart}

\usepackage{amssymb}
\usepackage{enumerate}
\usepackage{amsmath}
\usepackage{amsthm}
\usepackage{textcomp}

\usepackage{tikz}
\usetikzlibrary{plotmarks, shapes.geometric}

\usepackage{graphicx,epsfig,tikz}
\definecolor{wrwrwr}{rgb}{0.3803921568627451,0.3803921568627451,0.3803921568627451}
\definecolor{dtsfsf}{rgb}{0.8274509803921568,0.1843137254901961,0.1843137254901961}
\definecolor{sexdts}{rgb}{0.1803921568627451,0.49019607843137253,0.19607843137254902}
\definecolor{rvwvcq}{rgb}{0.08235294117647059,0.396078431372549,0.7529411764705882}
\definecolor{cqcqcq}{rgb}{0.7529411764705882,0.7529411764705882,0.7529411764705882}

\usepackage{amsmath,amsfonts,amssymb,enumerate,mdwlist}
\usepackage{latexsym,amsfonts,amssymb,amscd}
\usepackage{fancybox,shadow,color,graphicx,psfrag,float}

\usepackage{geometry}
\usepackage{stackengine}

\newcommand{\F}{\mathbb{F}}
\newcommand{\C}{{C}}
\renewcommand{\P}{\mathbb{P}}

\renewcommand{\textcolor}[2]{#2} 

\newtheorem{thm}{Theorem}[section]
\newtheorem{pro}[thm]{Proposition}
\newtheorem{lem}[thm]{Lemma}
\newtheorem{cor}[thm]{Corollary}

\newtheorem{rem}[thm]{Remark}

\theoremstyle{definition}
\newtheorem{defin}{Definition}

\begin{document}

\title{ LRC codes over characteristic $2$} %

\author{ F. Galluccio } 

     \email{F. Galluccio: fgalluccio @ fiq.unl.edu.ar}

\thanks{\\This work was partially supported by UNL CAI+D 2024 and a doctoral grant by CONICET. The author wishes to express his gratitude to his supervisors: María Chara, for her essential contributions to the drafting and writing of this article, and Edgar Martínez-Moro for his review and comments. \\ 
F. Galluccio: Universidad Nacional del Litoral and CONICET, Santa Fe, Argentina }

\begin{abstract}

In this work the construction of LRC codes given in \cite{lrcontowers} is completed, in the case of even characteristic.
A general construction is presented, that enables us to obtain linear LRC codes of large length $n \approx q^4$, dimension and distance of order $q^4$, and locality $r =q-1$. 
In addition, the cases $q = 4$ and $q=8$ are studied.

\bigskip
\noindent{\it Key words: Function fields, Towers, Codes, LRC, Asymptotic behavior}

\bigskip
\noindent{\it 2020 Mathematical Subject Classification: 94B27, 14H05, 11G20, 11T71}
\end{abstract}
\maketitle

\section{Introduction}

 Let $q = p^l$ be a power of a prime.
An $\textcolor{blue}{[n,k,d]}$ linear code $C$ is a $k$-dimensional subspace of $\F_q^n$ with minimum distance $d$.
In this context, for linear codes, the minimum distance is defined as
$$d = \min \left\{ d(x,0): \quad 0 \neq x \in C \right\},$$
where $0 \leq d(x,y) \leq n$ denotes the Hamming distance, i.e.\ the number of components in which the two vectors $x, y \in \F_q^n$ differ.
Consequently, $d(x,0)$ represents the number of nonzero coordinates in the vector $x$.
 
The importance of error-correcting codes lies in their ability to \textit{detect} and \textit{correct} errors that may arise due to transmission issues or loss of access to certain information.
That is, if a vector $x \in C \subset \F_q^n$ is sent through a channel and a vector $y \in \F_q^n$ is received, the decoder aims to reliably recover the original message vector $x$ using the structure of the code.

Formally, let $C$ be an $\textcolor{blue}{[n,k,d]}$ linear code, and let $t$ be such that $d \geq 2t + 1$. Then we say that $C$ can detect and correct up to $t$ errors using nearest-neighbor decoding:
For vectors $x \in C$ and $y \in \F_q^n$ with $d(x,y) \leq t$, then $x$ is the unique codeword in $C$ within distance at most $t$ from $y$.
Using this decoding, we obtain that for any codeword $x \in C$, and a set of indexes $i_1 < i_2 < \ldots < i_t$, with $2t + 1 \leq d$, it is possible to recover the coordinates $\{x_{i_1}, \ldots, x_{i_t}\}$ from the remaining $n - t$ coordinates.

The central question that motivates the study of locally recoverable codes is whether it is possible to recover any specific coordinate $x_i$ of the original message from a subset of $r$ coordinates, where $r$ is significantly smaller than $n - t$.

Following the definitions given in \cite{bargtamo}, we have that

\begin{defin}

An $\textcolor{blue}{[n,k,d]}$ code $C \subset \F_q^n$ is \emph{locally recoverable with locality $r$} if for every $1 \leq i \leq n$ there exists a subset $A_i \subset [1,n] \setminus \{i\}$, $|A_i| = r$, and a function $\phi_i : \F_q^{|A_i|} \to \F_q$ such that for every codeword $x = (x_1, x_2, \ldots, x_n) \in C$ we have
$$x_i = \phi_i(x_{j_1}, \dots, x_{j_{|A_i|}}),$$
\end{defin}

Note that the function $\phi_i$ in the previous definition depends only on the code $C$ and the position $i$, but not on the codeword $x$.
Moreover, there is no requirement for $\phi_i$ to be an algebraic function.
 
The aim of this work is to complete the constructions given in \cite{lrcontowers} by constructing examples of Locally Recoverable Codes (LRC Codes) defined over finite fields of even characteristic \textcolor{blue}{providing explicit examples and explicitly computing their dimension and minimum distance.}
In that article, there were obtained collections of LRC Codes with locality $r$ over $\F_{q^2}$\textcolor{blue}{, for odd $q$,} through the use of towers of function fields, in a manner that $r$ only depended on the degrees of each extension within the tower, whereas that their length $n_i$ is arbitrarily large, and $\frac{d_i}{n_i} + \frac{k_i}{n_i} \rightarrow \delta + R \approx 1$.

\textcolor{blue}{In the previous mentioned work, and in some references therein, some examples of} LRC codes with locality $r$ \textcolor{blue}{were exhibited} where the dimension $k$ is bounded and the minimum distance $d$ is known, or, viceversa, where the dimension $k$ is known and the minimum distance is bounded.
\textcolor{blue}{Other interesting examples} were obtained via Automorphisms in \cite{lrcautomorphisms} or via cartesian product of curves in \cite{hermitianlifted}. 
\textcolor{blue}{In particular, examples constructed over the rational function field were described in \cite{automorphismsrational} and \cite{automorphismsmultiple}, and over the Garcia Stichtenoth tower (see \cite{garciaetal}) were described in \cite{goodautomorphisms}.}

\textcolor{blue}{In the majority of the examples of this present} work, the dimension of the code will be known and can be computed, and the distance $d$ will be explicitly determined, while providing more insight of the vector space considered for the construction. It will be considered $\mathcal{F} = \left( F_0, F_1, F_2, \ldots \right)$ a tower of function fields (c.f.
section \ref{prelim}) such that $F_0  = \F_q(x_0)$ and $F_i = F_{i-1}(x_i)$ with $f(x_{i-1}, x_{i}) = 0$ for some suitable polynomial $f(x,y)$ and the following version of the construction from   \cite[Theorem 3.4]{lrcontowers}, \textcolor{blue}{whose proof can be obtained by adapting the arguments in the cited article}.
 
\begin{thm}\label{teoremareversionado}
Suppose $S \subset F_0$ be a set of $s$ rational places that split completely in rational places of $F_j$ that form a set $\mathcal{B}$ of $m \cdot s$ elements.
If 
$$V = 
\left \langle x_0^{e_0} \cdots x_j^{e_j}: \: x_i \in F_i, \: \textcolor{blue}{0 \leq e_0 \leq l, \:} 0 \leq e_i \leq \textcolor{blue}{m_i-1}, \: 0 \leq e_j \leq m_j-2 \right\rangle \subset F_j,$$ \textcolor{blue}{for $1 \leq i \leq j-1$,}
is an adequate vector space (in a manner that no $P \in \mathcal{B}$ is a pole of any $f \in V$), then the linear code \textcolor{blue}{$C = C_i(\mathcal{B}, V)$ is an LRC} Code of length $n = ms$, dimension $k = \textcolor{blue}{(l+1) m_1 \cdots m_{j-1}} (m_j-1) $, locality $r = m_j-1$ and minimum distance 
$$d \geq n - \textcolor{blue}{l \deg(x_0)}- \sum_{i= \textcolor{blue}{1}}^{j-1} \textcolor{blue}{(m_i-1)}\deg(x_i) - (m_j - 2) \deg(x_j).$$

\textcolor{blue}{After establishing any ordering of the elements of $\mathcal{B}$, we can recover any missing coordinate $f(P_i)$, for $P_i \in \mathcal{B}$ with $1 \leq i \leq n = q^i(q^2-q)$, by defining 
$$A_i = \left\{ P \in \mathcal{B}: \: P \cap F_{j-1} = P_i \cap F_{j-1}\right\} \setminus\{P_i\}$$ 
of cardinality $[F_j:F_{j-1}]-1 = m_j-1$.
 In this case, any element $f \in V$ restricted to $A_i$ behaves as a polynomial in the function $x_j$ of degree at most $m_j-2$ since the elements $x_0, x_1, \ldots, x_{j-1}$ are constant functions in these sets $A_i$.
 So it is possible to recover $f(P_i)$ as the evaluation $\widetilde{f}(x_j(P_i))$ where $\widetilde{f}(T) = \sum_{i=0}^{m_j-1} b_i T^i$ is obtained by polynomial interpolation on the points $\{(x_j(P), f(P)), \: P \in A_i\}$. }
\end{thm}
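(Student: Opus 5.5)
The argument follows the standard algebraic-geometry code template, adapted to the tower. I would treat $C$ as the image of the evaluation map $\mathrm{ev}\colon V \to \F_q^{n}$, $f \mapsto (f(P))_{P\in\mathcal{B}}$. This map is well defined because, by hypothesis, no $P\in\mathcal{B}$ is a pole of any $f\in V$. The length is then $|\mathcal{B}| = ms$.

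\emph{Step 1 (dimension of $V$).} First I would show that the generating monomials $x_0^{e_0}\cdots x_j^{e_j}$ are linearly independent over $\F_q$. I would argue by induction up the tower. Suppose $[F_i:F_{i-1}]=m_i$, with $x_i$ a primitive element whose minimal polynomial over $F_{i-1}$ has degree $m_i$; this is part of the setup $F_i=F_{i-1}(x_i)$, $f(x_{i-1},x_i)=0$. Then $1,x_i,\dots,x_i^{m_i-1}$ is an $F_{i-1}$-basis of $F_i$, so products of such bases give independent elements. At level $0$, the powers $x_0^{e_0}$ with $e_0\le l$ are independent in $\F_q(x_0)$. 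Hence $\dim V=(l+1)m_1\cdots m_{j-1}(m_j-1)$.

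\emph{Step 2 (injectivity and distance).} Take a nonzero $f\in V$. Its pole divisor is bounded: in $F_j$ we have $(f)_\infty \le l(x_0)_\infty+\sum_{i=1}^{j-1}(m_i-1)(x_i)_\infty+(m_j-2)(x_j)_\infty$, where each $(x_i)_\infty$ is taken in $F_j$. Hence
\[
\deg (f)_0=\deg (f)_\infty\le D:=l\deg(x_0)+\sum_{i=1}^{j-1}(m_i-1)\deg(x_i)+(m_j-2)\deg(x_j).
\]
Every zero coordinate of $\mathrm{ev}(f)$ corresponds to a rational zero of $f$ in $\mathcal{B}$, so the weight of $\mathrm{ev}(f)$ is at least $n-D$. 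This gives the bound on $d$. Provided $n>D$, the same count shows that $\mathrm{ev}$ is injective, so $k=\dim V$.

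\emph{Step 3 (locality).} This is the step I expect to need the most care, because one must check that each recovery set has the right size and that interpolation applies. Since each $P_0\in S$ splits completely, each $Q=P\cap F_{j-1}$ is rational and splits completely in $F_j/F_{j-1}$. Therefore the fibre over $Q$ contains exactly $m_j$ places of $\mathcal{B}$, and $|A_i|=m_j-1$. On this fibre the functions $x_0,\dots,x_{j-1}$ take the constant values $x_t(Q)$, because all these places have the same residue in $F_{j-1}$. So $f|_{\text{fibre}}=g(x_j)$, where $g$ is a polynomial over $\F_q$ of degree at most $m_j-2$.

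The values $x_j(P)$ must be pairwise distinct on the fibre; this is the point to verify. Because $Q$ splits completely, the minimal polynomial of $x_j$ over $F_{j-1}$ reduces modulo $Q$ to a polynomial with $m_j$ distinct roots. Here I would use Kummer's theorem, possibly after checking that $x_j$ is integral at $Q$, which follows since it has no poles at $\mathcal{B}$. Given distinct nodes, Lagrange interpolation on the $m_j-1$ points of $A_i$ recovers $g$ uniquely, since $\deg g\le m_j-2$. Evaluating $g$ at $x_j(P_i)$ then returns $f(P_i)$. The map $\phi_i$ is linear and depends only on $C$ and $i$, which proves locality $r=m_j-1$.
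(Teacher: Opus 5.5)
Your Steps 1 and 2 follow the same route as the paper, which defers to \cite{lrcontowers} and only sketches the fibre interpolation inside the statement. Your proviso $n>D$ (or something like it) is genuinely needed for $k=\dim V$, although the statement itself omits it.

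\textbf{The gap.} The step that fails is the one you flagged in Step 3. Complete splitting of $Q$ in $F_j/F_{j-1}$ does \emph{not} force the minimal polynomial $\varphi$ of $x_j$ to reduce to a polynomial with $m_j$ distinct roots. Kummer's theorem only gives the reverse implication: if $x_j$ is integral at $Q$ and $\overline{\varphi}$ is separable, then the places over $Q$ correspond bijectively to its irreducible factors. The converse breaks down because $1,x_j,\dots,x_j^{m_j-1}$ need not be a local integral basis at $Q$.

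Concretely, suppose $m_j=3$, and a generator $y$, integral at $Q$, has residues $0,1,c$ (with $c\neq 0,1$) at the three places $P_1,P_2,P_3$ over $Q$.
\begin{itemize}
\item Then $x_j:=y^2-y$ still generates $F_j$ over $F_{j-1}$, since the degree is prime and $x_j\notin F_{j-1}$. So $Q$ still splits completely.
\item The residues of $x_j$ are $0,0,c^2-c$, so $\overline{\varphi}=T^2(T-(c^2-c))$ has a double root.
\item Worse, $x_j\in V$ because $m_j-2=1$. The codewords coming from $x_j$ and from $0$ agree at $P_1,P_2$ but differ at $P_3$. So no recovery function on the prescribed set $A=\{P_1,P_2\}$ exists.
\end{itemize}
If \emph{all} residues on a fibre coincide, every $f\in V$ is constant there and recovery is trivial. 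It is partial collisions that destroy the interpolation.

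\textbf{The fix.} Pairwise distinctness of the values $x_j(P)$ on each fibre must either be added as a hypothesis or verified for the concrete tower. The paper's sketch takes it for granted as well. It does hold for the Artin--Schreier recursions actually used in the paper. Suppose $x_j^q+x_j=g(x_{j-1})$ (or $x_j^2+x_j=g(x_{j-1})$), with $g(x_{j-1})$ regular at $Q$; this is true over $S$ in all the paper's towers.
\begin{itemize}
\item Then $\varphi=T^q+T-g(x_{j-1})\in\mathcal{O}_Q[T]$, and the derivative of $\overline{\varphi}$ is $1$, so $\overline{\varphi}$ is separable.
\item Kummer's theorem, now in its correct direction, together with complete splitting shows that $\overline{\varphi}$ splits into $m_j$ distinct linear factors.
\item Hence the $m_j$ places over $Q$ have pairwise distinct residues $x_j(P)$.
\end{itemize}
With that in place, the rest of your Step 3 goes through: constancy of $x_0,\dots,x_{j-1}$ on the fibre, Lagrange interpolation through $m_j-1$ distinct nodes, and linearity of $\phi_i$.
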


It is worth to note here that $d \geq 0$ is a necessary condition for this code to exist, while $n - \textcolor{blue}{l \deg(x_0)}- \sum_{i= \textcolor{blue}{1}}^{j-1} \textcolor{blue}{(m_i-1)}\deg(x_i) - (m_j - 2) \deg(x_j) \geq 0$ is a sufficient condition.
We still do not know any necessary and sufficient conditions for $d \geq 0$ independently of the choice of the tower $\mathcal{F}$ nor the base field $F_0$.

Nevertheless, for any suitable tower $\mathcal{F}$, set $S\subset F_0$ of rational places that split completely on $\mathcal{F}$ and vector space $V = \left \langle x_0^{e_0}\cdots x_j^{e_j} , \: 0 \leq e_i \leq k_i \right \rangle,$ we can construct LRC Codes of length $n = m \cdot s$, dimension $k = \dim V= k_0 \cdots k_j $ and \textcolor{blue}{known minimum distance $d = n - \sum_{i=0}^j k_i \deg(x_i)$.} 

These constructions will be exhibited in the sections \ref{primera}, specifically \ref{ej313}, and \ref{segunda}.
Later in \ref{conclusiones} these constructions will be compared.

\section{Preliminaries}\label{prelim}

In this section we will proceed to briefly define terms and objects which will be used among this work.
By nothing that any linear code $C$ is a subspace of $\F_q^n$, we can obtain examples of linear codes as images of injective morphisms $f: \F_q^k \to \F_q^n$.
These definitions may be revisited with more detail in \cite{stichtenoth}.

\begin{defin} For a function field $F / \F$, let $\P_F$ the set of all its places.
Let $P_1, \ldots, P_k$ be places of a function field $F/ \F$, and $n_1, \ldots, n_k \in \mathbb{Z}$.
A divisor of $F/\F$ is a (finite) formal sum $D = n_1P_1 + \ldots + n_kP_k = \sum_{\P_F} n_P P$.
The Riemann-Roch space $\mathcal{L}(D)$ is the $\F-$vector space of all functions $f \in F$ such that $v_P(f) \geq -n_P$ for all $P \in \P_F$.
The support of $D$ is the set $\{P: \: n_P \neq 0 \}$, its degree is $\deg(D) = \sum n_P$, and its dimension is $\ell(D) = \dim\left( \mathcal{L} (D) \right).$
\end{defin}

\begin{defin}[AG Code]
Let $D = P_1+ P_2 + \ldots + P_n$ be a divisor whose support consists of $n$ distinct rational places of $F$, and let $G$ be any divisor of $F / \F_q$ whose support does not contain any $P_i$.
Suppose $k = \ell(G) - \ell(G-D) > 0$ and $d = n - \deg(G) > 0$.
The Algebraic-Geometric Code (or simply AG Code) is defined as
$$C = C_{\mathcal{L}}(D,G) = \left\{(x(P_1), x(P_2), \ldots, x(P_n)) : \: x \in \mathcal{L}(G) \right\} \subset \F_q^n.$$
\end{defin}

We may define similarly an evaluation code as in the definition of AG Code.
It is important to note that not all Evaluation Codes will be AG Codes, neither all AG Codes will be Evaluation Codes.

\begin{defin}[Evaluation Code]
Let $B = \{P_1, P_2, \ldots, P_n \}$ be a set of $n$ distinct rational places of $F$, and let $G$ be a divisor of $F / \F$ whose support does not contain any $P_i$.
Suppose $V \subset \mathcal{L}(G)$ verifies $k = \dim V > 0$ and  $d = n - \deg(G) > 0$.

The evaluation code is defined as 
$$C = C(B,V) = \left\{(x(P_1), x(P_2), \ldots, x(P_n)) : \: x \in V \subset \mathcal{L}(G) \right\}.$$
\end{defin}

Finally, we present the definition of towers of function fields.
In the case of good asymptotic behavior of these towers we will have several examples of function fields with many rational places and vector spaces with known dimension.

\begin{defin}[Tower of function fields]
Let $F_0 \subset F_1 \subset F_2 \subset \cdots$ be a sequence of function fields over $\F_q$ such that every extension $F_{i+1}/F_i$ is finite and separable, the sequence $\{g(F_i)\}$ of their genus goes to $\infty$, and $F_i \neq F_{i+1}$.
Suppose also that every element in $F_i$ algebraic over $\F_q$ is also on $\F_q$, for each $i$.
Then we say that $\mathcal{F} = (F_0, F_1, F_2, \ldots)$ is a tower of function fields over $\F_q$.
\end{defin}

A tower $\mathcal{E} = (E_0, E_1, E_2, \ldots)$ is a subtower of another tower $\mathcal{F}$ if for each $i$ there is an embedding $E_i \subseteq F_j$ for some $j$.

In particular, a useful way to obtain towers of function fields is to consider a polynomial $f(x,y)$ of degree $d \geq 2$ in each variable, and define recursively $F_i = F_{i-1}(x_i)$ for $i\geq 1$ where $F_0 = \F_q(x_0)$ for $x_0$ trascendental over $\F_q$, and $x_{i}$ and $x_{i-1}$ are algebraically dependent over $\F_q$ via the relation $f(x_{i-1}, x_i) = 0$.
In this particular case we can understand explicitly the extension $F_i/F_{i-1}$.
Moreover,  a necessary condition for a good asymptotic behavior is equal degree in each variable in the polynomial $f(x,y)$ (see \cite{skewpyramids}).

\section{Construction 1} \label{primera}

Consider the extensions $F_0 \subset F_1 \subset F_2$ belonging to the tower of function fields described in \cite{garciaetal},
$$\mathcal{W} = (F_0, F_1, F_2, \ldots),$$
defined recursively over $\F_{q^2}$, with $q=2^{\ell} > 4$, by $F_0 = \F_{q^2}(x_0)$ and $F_i = F_{i-1}(x_i)$ where
$$x_{i+1}^q + x_{i+1} = \dfrac{ x_{i}^q}{x_i^{q-1}+1}.$$

We can characterize the rational places that split completely in $\mathcal{W}$ in the following sense: If 
$$S_0 = \F_{q^2} \setminus \F_q = \{ \beta \in \F_{q^2}: \beta^q + \beta \neq 0 \},$$
we have that the set of rational places in $R_0$ that splits completely in $\mathcal{W}$ is
$$S = \{ P_{\beta} \in \P(F_0): \: \beta \in S_0 \}.$$
In fact, it is easy to see that $P_1$ ramifies in the extension $F_1/F_0$, while $P_{0}$ splits in the $q$ places $P_{0\beta}$ with $\beta \in \F_q$ and the place $P_{01}$ of $F_1$ ramifies in $F_2/F_1$.
Since the extensions are defined recursively, we have that for $\beta \in \F_{q^2} \setminus \{0,1\}$, the place $P_{\beta}$ splits totally in the extension $F_2/F_0$ if and only if  $\frac{x_0(P_{\beta})^q}{x_0(P_{\beta})^q+1} \neq 0,1$, and since $\beta = x_0(P_{\beta}) \neq 0$, it is equivalente  to $\beta^q+\beta = Tr(\beta) \neq 0$.

\textcolor{blue}{In this context, $Tr: \F_{q^2} \to \F_q$ is the trace function, with $Tr(\beta) = \beta^q + \beta$, and $N: \F_{q^2} \to \F_q$ is the norm function, with $N(\beta) = \beta \cdot \beta^q = \beta^{q+1}$.}

\begin{pro}\label{propdecompo}
The $q^2-q$ rational places of $S$ can be naturally partitioned in $(q-1)$ subsets $S_1, \ldots, S_{q-1}$, each one of $q$ elements, characterized by the value $$b_i = \frac{\beta^{q+1}}{\beta^q+\beta} = \frac{N(\beta)}{Tr(\beta)} \in \F_q^*,$$\textcolor{blue}{$\beta \in S_0$}.
Each place $P$, with $x_0(P) \in S_i$, splits completely in $q$ rational places $Q \mid P_{\beta}$ with $x_1(Q) = \gamma$ and $Tr(\gamma) = b_i.$
\end{pro}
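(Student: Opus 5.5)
We first show that the map $\beta \mapsto b(\beta) = N(\beta)/Tr(\beta)$ sends $S_0$ into $\F_q^*$ with every fibre of size $q$. Then we use Kummer's theorem on the defining equation of $F_1/F_0$ to describe the places above each $P_\beta$.

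\emph{Step 1 (the partition).} For $\beta \in S_0$ we have $Tr(\beta)\neq 0$, and $N(\beta)\neq 0$ because $\beta \neq 0$; both lie in $\F_q$, so $b(\beta) \in \F_q^*$. To count a fibre, fix $b \in \F_q^*$. The condition $b(\beta)=b$ means $\beta^{q+1} = b(\beta^q+\beta)$, that is,
$$(\beta+b)^{q+1} = \beta^{q+1} + b\beta^q + b\beta + b^2 = b^2,$$
since $b^q=b$ and we are in characteristic $2$. So $\beta + b$ ranges over the elements of norm $b^2$. Because the norm $\F_{q^2}^*\to\F_q^*$ is surjective with kernel of order $q+1$, there are exactly $q+1$ such $\beta$. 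Among them we must discard those with $Tr(\beta)=0$, i.e.\ $\beta \in \F_q$. For such $\beta$ the equation becomes $\beta^2 = 2b\beta = 0$, so only $\beta=0$ is excluded, and $\beta=0$ does lie on the curve. Each fibre $S_b$ therefore has exactly $q$ elements. With $q-1$ values of $b$ this accounts for $q(q-1)=|S|$ places, which gives the partition $S_1,\dots,S_{q-1}$.

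\emph{Step 2 (splitting in $F_1$).} Take $P_\beta$ with $\beta \in S_0$. The minimal polynomial $T^q+T - x_0^q/(x_0^{q-1}+1)$ of $x_1$ has coefficients that are integral at $P_\beta$: the pole would require $\beta^{q-1}=1$, i.e.\ $\beta\in\F_q^*$, which is excluded. Reducing modulo $P_\beta$ gives $T^q+T-c$ with
$$c = \frac{\beta^q}{\beta^{q-1}+1} = \frac{\beta^{q+1}}{\beta^q+\beta} = b(\beta).$$
This polynomial is separable, since its derivative is $1$. Its roots are the $\gamma\in\F_{q^2}$ with $Tr(\gamma)=c$. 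The trace $\F_{q^2}\to\F_q$ is surjective and $\F_q$-linear with kernel of size $q$, so there are exactly $q$ distinct roots, all in $\F_{q^2}$. Kummer's theorem then yields $q$ rational places $Q\mid P_\beta$ with $x_1(Q)=\gamma$ and $Tr(\gamma)=b_i$. The remaining claim, complete splitting up to $F_2$, is exactly the characterization of $S$ already given before the proposition. Alternatively, it follows by repeating Kummer's theorem at each $Q$: one checks $\gamma\notin\F_q$, which holds because $Tr(\gamma)=b_i\neq 0$.

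\emph{Main obstacle.} The only delicate point is the fibre count in Step 1. Completing the norm, $\beta\mapsto\beta+b$, turns the equation into a norm equation, and after that the count is immediate. The real care is needed in removing exactly the point $\beta=0$ from the conic; getting this right is what produces fibres of size $q$ rather than $q+1$.
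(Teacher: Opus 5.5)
Your proof is correct. Step 2 follows essentially the same route as the paper: there too one reduces $x_0^q/(x_0^{q-1}+1)$ at $P_\alpha$ to $N(\alpha)/Tr(\alpha)$ and uses that exactly $q$ elements have a given nonzero trace (Lemma~\ref{lema3}(1)).

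The difference in Step 2 is how complete splitting is obtained. You invoke Kummer's theorem, which gives at once $q$ rational places with pairwise distinct values $x_1(Q)=\gamma$. The paper instead takes complete splitting from the discussion preceding the proposition and matches ``$q$ places'' with ``$q$ solutions'' by counting. That matching tacitly needs distinct places above $P_\alpha$ to have distinct $x_1$-values, which is exactly what Kummer supplies.

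The genuine difference is Step 1. The paper defines $S_i$ as the fibres of $N/Tr$ but never shows that each fibre has $q$ elements. Its Lemma~\ref{lema3}(3) only shows that each fibre is a union of conjugate pairs $\{\gamma,\gamma^q\}$, so the fibre size is even but not determined. Your substitution $\beta\mapsto\beta+b$ turns $N(\beta)=b\,Tr(\beta)$ into $N(\beta+b)=b^2$. Together with the observation that $\beta=0$ is the only solution lying in $\F_q$, this gives the count $(q+1)-1=q$ directly.

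So your version is more self-contained and actually proves the ``each one of $q$ elements'' claim. The paper's version gains only brevity, by leaning on the earlier splitting description and leaving the fibre count implicit.
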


In order to prove this proposition, it is convenient to recall the following result:

\begin{lem} \label{lema3} Let $\F_q^* = \F_q \setminus\{0 \}$:
\begin{enumerate} \item given $b \in \F_q^*$, there exist exactly $q$ elements $\gamma \in \F_{q^2}$ with $Tr(\gamma) = b$,
\item given $c \in \F_q^*$, there exist exactly $q+1$ elements $\gamma \in \F_{q^2}$ with $N(\gamma) = c$,
\item given $b,c \in \F_q^*$ there exist either $0$ or $2$ elements $\gamma \in \F_{q^2}$ with $Tr(\gamma)=b$ and $N(\gamma)=c$.
\end{enumerate}
\end{lem}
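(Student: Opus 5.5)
The plan is to treat $\F_{q^2}$ as a $2$-dimensional vector space over $\F_q$ and to use that the trace and norm maps are surjective homomorphisms. We can also use that every $\gamma\in\F_{q^2}$ is a root of its characteristic polynomial $T^2 - Tr(\gamma)T + N(\gamma)\in\F_q[T]$; in characteristic $2$ this is $T^2+Tr(\gamma)T+N(\gamma)$.

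For (1), the map $Tr(\gamma)=\gamma^q+\gamma$ is $\F_q$-linear from $\F_{q^2}$ to $\F_q$. Its kernel consists of the roots of $T^q+T$, which are exactly the $q$ elements of $\F_q$ (here $\gamma^q=-\gamma=\gamma$ in characteristic $2$). Hence the image has $q^2/q=q$ elements, so $Tr$ is onto $\F_q$. Every fibre $Tr^{-1}(b)$ is then a coset of the kernel and has exactly $q$ elements.

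For (2), the map $N(\gamma)=\gamma^{q+1}$ is a group homomorphism $\F_{q^2}^*\to\F_q^*$. Since $\F_{q^2}^*$ is cyclic of order $q^2-1=(q-1)(q+1)$, its kernel consists of the solutions of $\gamma^{q+1}=1$, of which there are exactly $q+1$. The image therefore has $(q^2-1)/(q+1)=q-1$ elements, so $N$ is onto $\F_q^*$, and each fibre over $c\in\F_q^*$ has $q+1$ elements.

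For (3), suppose $\gamma$ satisfies $Tr(\gamma)=b$ and $N(\gamma)=c$. Then $\gamma$ and $\gamma^q$ are the two roots of $p(T)=T^2+bT+c$, because their sum is $b$ and their product is $c$. So every such $\gamma$ is a root of this fixed quadratic, which gives at most $2$ solutions. If $p$ has a root $\gamma\in\F_{q^2}\setminus\F_q$, then $\gamma^q\neq\gamma$ is the other root, and both elements have trace $b$ and norm $c$, giving exactly $2$ solutions.

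It remains to rule out a root lying in $\F_q$, and this is the delicate step.
\begin{itemize}
\item If $\gamma\in\F_q$, then $Tr(\gamma)=2\gamma=0\neq b$. So a root of $p$ in $\F_q$ never satisfies the trace condition.
\item Could $p$ have one root in $\F_q$ and one outside? No: the roots of a quadratic over $\F_q$ are either both in $\F_q$ or Galois-conjugate in $\F_{q^2}$.
\item The roots are distinct, since $b\neq 0$ means the discriminant-type derivative $p'(T)=b$ is nonzero.
\end{itemize}
Consequently, either $p$ is irreducible over $\F_q$ and yields exactly $2$ valid $\gamma$, or $p$ splits over $\F_q$ and yields $0$ valid $\gamma$. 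Both cases occur, as a count confirms: $q$ elements of trace $b$, split into norm classes of size $0$ or $2$.
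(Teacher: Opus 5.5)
Your proof is correct, but for parts (1) and (2) it takes a different route from the paper.

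The paper argues purely by counting. It bounds the number of solutions of $\gamma^q+\gamma=b$ (resp.\ $\gamma^{q+1}=c$) by the degree of the polynomial. It then uses a pigeonhole argument over the $q-1$ nonzero values to force equality. For the norm this needs an extra characteristic-$2$ fact: $a^{q+1}=a^2=c$ has exactly one solution $a\in\F_q$, because every element is a square. Hence exactly $q$ further solutions must lie in $\F_{q^2}\setminus\F_q$.

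You instead treat $Tr$ as an $\F_q$-linear map with kernel $\F_q$. You treat $N$ as a homomorphism of cyclic groups $\F_{q^2}^*\to\F_q^*$ whose kernel is the $(q+1)$-th roots of unity. Surjectivity and the fibre sizes then follow at once as coset counts. This is cleaner, and your norm argument works in any characteristic.

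For (3) you and the paper use the same idea: the solutions are the roots of $T^2+bT+c$, and they are distinct because $b\neq 0$. You also state explicitly that a root lying in $\F_q$ has trace $0\neq b$, which is what actually produces the ``$0$'' case. The paper only says that in the split case both roots lie in $\F_q$, and leaves this exclusion implicit.

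One small slip in an aside: your closing claim that both cases occur needs $q\ge 4$. For $q=2$ your count leaves no empty norm class. The lemma does not require both cases to occur, so this does not affect the proof.
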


\begin{proof}
 \phantom{a}

\begin{enumerate} 
\item It is clear that the equation $\gamma^q+\gamma=b$, \textcolor{blue}{where $\gamma$ is a variable}, has at most $q$ solutions in $\F_{q^2}$.
By the pigeonhole principle, since there are $q^2-q$ elements in $\F_q^2$ with nonzero trace, and exactly $q-1$ possible values for its trace, necessarily each equation has exactly $q$ solutions.

 \item It is clear that the equation $\gamma^{q+1}=c$, \textcolor{blue}{where $\gamma$ is a variable}, has at most $q+1$ solutions in $\F_{q^2}$.
However,  any $a \in \mathbb{F}_q$ verifies $a^{q+1} = a^qa = a^2$.
Since every element in a finite field of even characteristic is a square, there is exactly one solution in $\F_q$, and thus any other solution must be on $\F_{q^2}$, hence in $S_0$.
By the pigeonhole principle, since there are $q^2-q$ elements in $S_0$, and exactly $q-1$ possible values for its norm, necessarily each equation has exactly $q$ additional solutions in $S_0$.
 
 \item Given any solution $\gamma \in \F_{q^2}$ \textcolor{blue}{to both equations in the previous items}, there exists $\gamma' = \gamma^q$ with $Tr(\gamma) = \gamma + \gamma'$ and $N(\gamma) = \gamma \gamma'$.
Since the characteristic polynomial of $\gamma$ is precisely
$$p_{\gamma}(T) = T^2 - bT +c,$$
and since $b \neq 0$ we do have $\gamma \neq \gamma'$.
The existence of solutions is trivial, since $\F_{q^2} / \F_q $ is the only extension of degree $2$ of $\F_q$, either the polynomial is irreducible and has two roots in $\F_{q^2}$ or it factors and thus has two roots in $\F_q$.
   
 \end{enumerate}
\end{proof}

Now we can prove the Proposition \ref{propdecompo}.

\begin{proof}
Let $P_{\beta} \in S \subset \P(F_0)$, and let $\beta = x_0(P)$ which is non-zero.
For $i=1, \ldots, q-1$, consider $b_i =\frac{N(\beta)}{Tr(\beta)}$, and let $S_i$ be the set of all elements $\alpha \in S$ such that $\frac{N(\alpha)}{Tr(\alpha)} = b_i$.
Now for any $\alpha \in S_i$, and any $Q \mid P_{\alpha}$, it holds that $x_1(Q) = \gamma$ with $Tr(\gamma) = \frac{N(\alpha)}{Tr(\alpha)} = b_i$.
Since there are exactly $q$ solutions $\gamma$, and $q$ distinct places $Q \mid P$, the set $\{x_1(Q): Q \mid P_{\alpha} \}$ is in fact the same for all such $P$.
The partition is natural in the sense it does not depend on the choice of any generator for $\F_q$ nor $\F_{q^2}$.
\end{proof}

We can obtain the following:

\begin{pro}
Let $B_i= \{ \beta \in \F_{q^2}: \: Tr(\beta) = b_i \}$, for $b_i \in \F_q^*$.
Then for each pair of indexes $1 \leq i,k \leq q-1$, we have that
$$|S_k \cap B_i | = | \{ \alpha \in \F_{q^2} : \: Tr(\alpha) = b_i, \: N(\alpha) = b_ib_k \} | \leq 2,$$
so that, for each rational place $P \in \P(F_j)$ that splits completely, the set of places $Q \in \P(F_{j+1})$, $Q \mid P$ can be naturally partitioned into $\frac{q}{2}$ pairs of places such that each pair is characterized by a different value of $\dfrac{N( x_{j+1}(Q))}{Tr(x_{j+1}(Q))}$.
\end{pro}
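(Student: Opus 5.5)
The plan is to read $S_k$ as the set of $\alpha \in S_0$ with $N(\alpha)/Tr(\alpha) = b_k$, where $b_k$ is the value from Proposition \ref{propdecompo}. Then I would compute $S_k \cap B_i$ directly and finish by specialising the count to the fibre over a completely splitting place.

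First, the intersection. An element $\alpha \in S_k \cap B_i$ satisfies $Tr(\alpha) = b_i$ and $N(\alpha)/Tr(\alpha) = b_k$, so $N(\alpha) = b_i b_k$. Conversely, if $Tr(\alpha) = b_i \neq 0$ and $N(\alpha) = b_i b_k$, then $\alpha \in S_0$ and its ratio is $b_k$, so $\alpha \in S_k \cap B_i$. This gives the displayed equality of sets. Since $b_i, b_ib_k \in \F_q^*$, Lemma \ref{lema3}(3) shows the set has $0$ or $2$ elements, hence at most $2$. I would also note that the two elements are $\alpha$ and $\alpha^q$, the two roots of $T^2 - b_iT + b_ib_k$.

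Next, the fibres. Let $P \in \P(F_j)$ split completely. The recursion $x_{j+1}^q + x_{j+1} = x_j^q/(x_j^{q-1}+1)$ has the same shape at every level, so exactly as in Proposition \ref{propdecompo} the places $Q \mid P$ correspond bijectively to the $q$ elements $\gamma \in B_i$. Here $b_i = x_j(P)^q/(x_j(P)^{q-1}+1) \in \F_q^*$, and $x_{j+1}(Q) = \gamma$. Each such $\gamma$ has nonzero trace, so $N(\gamma)/Tr(\gamma)$ is defined and equals some $b_k$. The fibre is therefore partitioned by the $q-1$ possible values $b_k$ into the pieces $B_i \cap S_k$, each of size $0$ or $2$. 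Because $|B_i| = q$, exactly $q/2$ of these pieces are nonempty. This gives $q/2$ pairs $\{\gamma, \gamma^q\}$ with pairwise distinct values of $N/Tr$.

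The main obstacle is justifying that the fibre over $P$ is indexed precisely by $B_i$, with $x_{j+1}$ taking distinct values on distinct places, for general $j$ rather than only $j=0$. I would handle this by repeating the Kummer-type argument of Proposition \ref{propdecompo} one level up. If $x_j(P) \notin \{0,1,\infty\}$ and the right-hand side is nonzero, the polynomial $T^q + T - b_i$ has $q$ distinct roots in $\F_{q^2}$ by Lemma \ref{lema3}(1), and Kummer's theorem then gives $q$ rational places with those residues. I also need $Tr(x_j(P)) \neq 0$, which is where the complete splitting hypothesis is used. Naturality follows because the pairing is defined intrinsically via $N$ and $Tr$.
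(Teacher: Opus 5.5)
Your proposal is correct and follows the paper's proof. The first part is Lemma \ref{lema3}(3) applied with $Tr=b_i$, $N=b_ib_k$; the second part uses that the right-hand side of the recursion evaluates to $N(x_j(P))/Tr(x_j(P))$, so the fibre over $P$ is a trace class $B_i$ that splits by colour into conjugate pairs $\{\gamma,\gamma^q\}$. Your explicit count ($|B_i|=q$ forces exactly $q/2$ non-empty pieces) and the Kummer justification fill in what the paper leaves implicit. One slip: the excluded values should be all of $\F_q\cup\{\infty\}$, not just $\{0,1,\infty\}$, since $x^{q-1}+1$ vanishes on all of $\F_q^*$; your condition $Tr(x_j(P))\neq 0$ already covers this.
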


\begin{proof}
For $\alpha \in S_k$, and setting $b_k = \frac{N(\alpha)}{Tr(\alpha)}$, if $\alpha \in B_i$ then $Tr(\alpha) = b_i \neq 0$ and then $N(\alpha)=b_ib_k \neq 0$, and by the previous lemma it is clear that there are either $0$ or $2$ such $\alpha$.

Now, for $P \in F_j$, we can show by induction that, if $x_j(P) \in S_k$ and $Q \mid P$, then $\gamma = x_{j+1}(Q)$ is a solution of the equation $\gamma^q+\gamma = b_k$ so $\gamma \in B_k$.
However, since $Tr(\gamma) \neq 0$ it must be that $\gamma \in S_{h}$ for some $1 \leq h \leq q-1$, and therefore $\gamma' = \gamma^q$ is also in $B_k \cap S_h$, which shows that there exists $Q' \mid P$ such that $x_{j+1}(Q') = \gamma' \in S_{h} \cap B_k$ .
\end{proof}

In the following theorem, we adapt the method presented in \cite{lrcontowers} to the case where q is even.

\begin{thm}  \label{ej312} Let $q > 2$ be an even prime power, and let $S = \{P_{\alpha} \in \P (F_0) : \alpha \in S_0 \}$ with $S_0 = \F_{q^2} \setminus \F_q$.
If
$$\mathcal{B} = \{Q \in \P(F_2): \: Q \mid P_{\alpha} \text{ for some }  P_{\alpha} \in S\}$$
 and $V$ is the vector space spanned by
$$\{x_0^i x_1^j x_2^k: 0 \leq i \leq \frac{q^2}{2}-q, \: 0 \leq j \leq q-1, \: 0 \leq k \leq q-2\},$$
then $V \subset \mathcal{L}(l P_{\infty})$ for some $l \in \mathbb{N}$ and the evaluation code \textcolor{blue}{$\C = \C_2(\mathcal{B}, V)$} is an $\textcolor{blue}{[n,k,d]}$ LRC code \textcolor{blue}{ of locality $r = q-1$ with parameters}: 

\begin{align*} n &=q^2(q^2-q),  \\
k & = \left( \frac{q^2}{2} - q +1 \right) q(q-1), \\
d & = n - \left(\frac{q^2}{2}-q \right) \cdot q^2 - (q-1)q^2 - (q-2)q^2,  \\ 
& = q^2\left(  \frac{q^2}{2} - 2q +3 \right).
\end{align*}
\end{thm}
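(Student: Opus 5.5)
The plan is to apply Theorem \ref{teoremareversionado} with $j=2$, $m_1=m_2=q$, $l=\frac{q^2}{2}-q$, to the places in $S$ and their lifts in $F_2$. Everything then reduces to three ingredients: the splitting of $S$ in $F_2$, the pole structure of $x_0,x_1,x_2$, and the distance bound.

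\textbf{Splitting.} The discussion preceding Proposition \ref{propdecompo}, together with that proposition, shows that each $P_\alpha$ with $\alpha\in S_0$ splits into $q$ rational places of $F_1$. At each of these, $x_1$ takes a value $\gamma$ with $Tr(\gamma)=b_i\neq 0$, so $\gamma\in S_0$ again. Applying the same argument to the extension $F_2/F_1$, which is defined by the same equation, each of these places splits into $q$ rational places of $F_2$. Hence $|\mathcal{B}|=(q^2-q)q^2=n$, and the recovery sets have size $m_2-1=q-1$. This gives the locality $r=q-1$.

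\textbf{Poles and degrees.} Let $P_\infty$ be the pole of $x_0$ in $F_0$. From $x_1^q+x_1=x_0^q/(x_0^{q-1}+1)$, the right-hand side has a simple pole at $P_\infty$ (it behaves like $x_0$). This forces $P_\infty$ to ramify totally in $F_1/F_0$, and the place above it is the only pole of $x_1$. Inductively, $P_\infty$ is totally ramified in $F_2$, with a unique place $Q_\infty$ above it, and $Q_\infty$ is the common unique pole of $x_0,x_1,x_2$. Measured in $F_2$, this gives
\[
\deg(x_0)_\infty=q^2=\deg(x_1)_\infty=\deg(x_2)_\infty .
\]
This is consistent with the degrees used in the statement. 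I expect this step to be the main obstacle: one must check carefully that no other pole of $x_1$ or $x_2$ appears (for example over the zeros of $x_0^{q-1}+1$), and in particular that those zeros are not in $S$. Since every root of $x_0^{q-1}+1$ lies in $\F_q^*$, they are excluded from $S_0$, which settles this point. Consequently $V\subset\mathcal{L}(lQ_\infty)$ with
\[
l=\Bigl(\frac{q^2}{2}-q\Bigr)q^2+(q-1)q^2+(q-2)q^2,
\]
and no element of $\mathcal{B}$ is a pole of any $f\in V$.

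\textbf{Parameters.} The monomials $x_0^ix_1^jx_2^k$ with $j,k\le q-1$ are linearly independent over $F_0$, because $\{x_1^jx_2^k\}$ is a basis of $F_2/F_0$. Hence $\dim V=(\frac{q^2}{2}-q+1)\,q(q-1)$. A nonzero $f\in V$ has at most $\deg(f)_\infty\le l$ zeros, so $d\ge n-l$. This bound is positive, and equal to $q^2(\frac{q^2}{2}-2q+3)$, so the evaluation map is injective and $k=\dim V$. For the equality $d=n-l$, I would exhibit a codeword attaining the bound. Take $f$ to be a product of linear factors $(x_0-\alpha)$, $(x_1-\gamma)$, $(x_2-\delta)$ in the allowed degrees, with the roots chosen among the coordinates of points of $\mathcal{B}$. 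Using the fact that $x_0$ takes each value of $S_0$ on exactly $q^2$ places of $\mathcal{B}$, the factors in $x_0$ alone produce exactly $q^2$ zeros per chosen root. The factors in $x_1$ and $x_2$ require counting zeros inside fibres, which relies on the pairing structure established in the preceding proposition; I would carry out this count, choosing the roots so that the zero sets of the factors are disjoint, and conclude the stated $d$.
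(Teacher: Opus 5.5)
Your skeleton matches the paper's: Theorem \ref{teoremareversionado} with $j=2$, all three pole degrees equal to $q^2$, hence $d\ge n-l$, then an explicit codeword attaining the bound. But the step that carries essentially all the content of the paper's proof is only announced (``I would carry out this count, choosing the roots so that the zero sets of the factors are disjoint''), and it is not routine.

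Each factor $x_i-\alpha$ with $\alpha\in S_0$ has exactly $q^2$ simple zeros, all in $\mathcal{B}$, and the degrees are maximal. So you need $\frac{q^2}{2}-q$ roots for $x_0$, $q-1$ for $x_1$ and $q-2$ for $x_2$ whose zero sets are pairwise disjoint, and the count is tight:
\begin{enumerate}
\item The zeros of $x_2-\delta$ have $x_1$-values in the colour class $\{\gamma : N(\gamma)/Tr(\gamma)=Tr(\delta)\}$. Its $q$ elements carry only $\frac q2$ distinct traces (Lemma \ref{lema3}(3)). Hence these zeros lie over $\frac{q^2}{2}$ of the $q^2-q$ places of $S$, and all of those are then forbidden as $x_0$-roots.
\item The zeros of $x_1-\gamma$ lie over the colour class of $Tr(\gamma)$. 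They meet those of $x_2-\delta$ exactly when $\gamma$ is in the colour class of $Tr(\delta)$.
\end{enumerate}

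The paper handles this as follows. Fix one colour $S_1$ (value $b_1$) and take all $q-2$ roots of $h_2$ among the elements of trace $b_1$. Then the roots of $h_0$ are forced to be exactly the $\frac{q^2}{2}-q$ elements $\alpha$ none of whose lifts has $x_1\in S_1$. The roots of $h_1$ are $q-1$ elements outside $S_1$ whose trace is one of the $\frac q2$ traces occurring in $S_1$; there are $\frac{q^2}{2}-q\ge q-1$ such elements. Without such a construction you have only $d\ge n-l$, not the stated value of $d$.

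Separately, your pole analysis is wrong, though the conclusion you need survives. The right-hand side $x_0^q/(x_0^{q-1}+1)$ also has simple poles at $P_a$, $a\in\F_q^*$, so $x_1$ has poles above these places as well. Likewise $x_2$ has further poles over $P_0$, where $x_1\in\F_q^*$. Observing that these $a$ lie outside $S_0$ shows only that the extra poles avoid $\mathcal{B}$, not that they do not exist.

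Your version is also self-contradictory. If $Q_\infty$ were the only pole, then $v_{Q_\infty}(x_1)=-q$ and $v_{Q_\infty}(x_2)=-1$, giving pole degrees $q$ and $1$, not $q^2$. That would give $d\ge n-l'$ with $l'<l$, incompatible with the stated $d$.

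The correct statement is that $\deg(x_i)_\infty=[F_2:\F_{q^2}(x_i)]=q^2$ for $i=0,1,2$, and $V\subset\mathcal{L}(G)$ with $G=(\frac{q^2}{2}-q)(x_0)_\infty+(q-1)(x_1)_\infty+(q-2)(x_2)_\infty$. This divisor has degree $l$ and support disjoint from $\mathcal{B}$, which is all that the bound $d\ge n-l$ requires. The paper's Remark makes the same slip: literally, $V\not\subset\mathcal{L}(lP_\infty)$.
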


\begin{rem}
Since the point at infinity in each $F_i$ is totally ramified in $F_{i+1}/F_i$ for each $i \geq 0$, we have that $P_{\infty} \in F_2$ is a pole of $x_0, x_1$ and $x_2$ with $\deg (x_0^i x_1^j x_2^k)_{\infty} \leq \left(\frac{q^2}{2}-q \right) \cdot q^2 + (q-1)q^2 + (q-2)q^2 $  so $V \subset \mathcal{L}(l P_{\infty})$ for $l \geq \left(\frac{q^2}{2}-q \right) \cdot q^2 + (q-1)q^2 + (q-2)q^2$.
\end{rem}

\begin{proof}

As in \cite{lrcontowers}, we know that $\deg(x_0) = \deg(x_1) = \deg(x_2) = q^2$, so that the expression for $d = n-l$ is effectively a lower bound.

We must show that there exist some function $f \in V$ with $n-d$ zeros in $\mathcal{B}$.

Consider $S_1 \subset S_0$ arbitrarily, which has $q$ elements.
It is easy to see $| \{ P \in \P(F_1) : \: x_1(P) \in S_1 \}| = q^2$ since, for each $\beta \in S_1$, there exist exactly $q$ places $P_{\alpha} \in \P(F_0)$ laying under some place $P$ with $x_1(P) = \beta$.
The last part of Lemma \ref{lema3} asserts that those $q^2$ places come \textit{in pairs}: by matching or identifying each set $S_i$ as one of $q-1$ different \textit{colors},  we see that for each $P_{\alpha} \in \P(F_0)$ there exist two places $P, \: P'$ of color $S_1$ according to the value $x_1(P)$.
In particular, $P' = \sigma(P)$ where $\sigma$ is the only non trivial automorphism of $\F_{q^2}/\F_q$, which is Galois since $[F_{q^2}:F_q]=2$ .

This pairing shows that, in consequence,  there exist exactly $\frac{q^2}{2}$ places $P_{\alpha} \in S$ with exactly $2$ places above it of the same color $S_1$.

Consider the remaining $q^2-q - \frac{q^2}{2}$ places, that is,
$$\mathbb{H}_0 = \{ P_{\alpha} \in \P(F_0) : \: x_1(P) \not\in S_1 \: \text{ for all } P|P_{\alpha}\}$$
of size $\frac{q^2}{2} - q$ and let
$$H_0  = \{x_0(P_{\alpha}): \: P_{\alpha} \in \mathbb{H}_0  \} = \{\alpha \in S_0 : \: P_{\alpha} \in \mathbb{H}_0 \}.$$

The function
$$h_0 =  \prod_{\alpha \in H_0} (x_0 - \alpha) \in V$$
has exactly $q^2 \cdot \left( \frac{q^2}{2} -q \right)$ zeros in $\mathcal{B}$, precisely those places $Q$ lying above some place in $\mathbb{H}_0$.

Among the remaining places $Q \in \mathcal{B}$ such that $Q \cap F_0 \not \in \mathbb{H}_0$, the value $x_1(Q)$ can be $q^2 - q$ different values, $q$ of those in $S_1$.
On the other hand,
$$| \{ x_1(Q): \: Q \in \mathcal{B} \text{ and }  Q \cap F_0 \in \mathbb{H}_0 \}| = q \left( \frac{q}{2}-1 \right),$$
since the elements of $S_0$ partitions naturally in $q-1$ subsets (\text{colors}) of size $q$, so that the places $P$, such that $x_0(P) \in S_i$, split in the same way in $q$ new places.
By having chosen $\frac{q}{2}-1$ colors, we have the previos equality.
 
Since $q \geq 8$, we have:
$$q^2 -q -q - q \left( \frac{q}{2}-1 \right) = \frac{q^2}{2}-q > q.$$
 Hence there exist some proper subset $H_1 \subset S_0$ with $q$ elements such that
$$H_1 \subset \{ x_1(Q): \: Q \cap F_1 \not \in \mathbb{H}_0, \text{ and } x_1(Q) \not \in S_1 \}$$
which allow us to construct a function
$$h_1 = \prod_{\alpha \in H_1} (x_1 - \alpha) \in V$$
of degree $q-1$ that has $q^2(q-1)$ zeros in $\mathcal{B}$.

Finally, as no zero $Q$ of $h_0$ or $h_1$ verifies $x_1(Q) \in S_1$, we can find
$$H_2 \subset \{x_2(Q): \: x_1(Q) \in S_1 \} = B_1$$
with $q-2$ elements such that the element
$$h_2 = \prod_{\alpha \in H_2} (x_2 - \alpha) \in V$$
has exactly $q^2(q-2)$ zeros in $\mathcal{B}$.

It is clear that the zeros of $h_0, h_1, h_2$ are distinct from each other,  that is, the sets  $\{Q \in \mathcal{B}: \: x_0(Q) \in H_0 \}$, $\{Q \in \mathcal{B}: \: x_1(Q) \in H_1 \}$ and $\{Q \in \mathcal{B}: \: x_2(Q) \in H_2 \}$ are disjoint by definition of $H_0, H_1$ and $H_2$.
Finally, the element
$$h = h_0h_1h_2 \in V$$
has exactly 
$$\left(\frac{q^2}{2}-q \right) \cdot q^2 + (q-1)q^2 + (q-2)q^2
$$zeros in $\mathcal{B}$,  showing that the lower bound described for $d$ is effectively attained.

\textcolor{blue}{Finally, it is possible to show that the locality of $C$ is indeed $r=q-1$ as in the proof of Theorem \ref{teoremareversionado}.}
\end{proof}

Refining the argument a little, we can show that $R = \frac{k}{n} > \frac{1}{2}$ is actually attainable with equality in the lower bound for $d$.

\begin{thm} \label{ej313} 

Let $q > 2$ be an even prime power, and let $S = \{ P_{\alpha} \in \P(F_0): \: \alpha \in S_0\}$ with $S_0 = \F_{q^2} \setminus \F_q$.
Suppose $q = 2^{2l+1}$, for
$$\mathcal{B} =\{Q \in \P(F_2): \: Q \mid P_{\alpha} \text{ for some } P_{\alpha} \in S\}$$
and $V$ the vector space spanned by
$$\left\{x_0^i x_1^j x_2^k: 0 \leq i \leq \frac{q^2}{2}, \: 0 \leq j \leq q-1, \: 0 \leq k \leq q-2\right\},$$
the code \textcolor{blue}{$C_q = C_2(\mathcal{B}, V)$} is an $\textcolor{blue}{[n,k,d]}$  LRC code \textcolor{blue}{of locality $r = q-1$ with parameters:}  
\begin{align*} n &=q^2(q^2-q)  \\
k & = \left( \frac{q^2}{2} +1 \right) q(q-1) \\
d & = n - \left(\frac{q^2}{2} \right) \cdot q^2 - (q-1)q^2 - (q-2)q^2  \\ 
& = q^2\left(  \frac{q^2}{2} - 3q +3 \right).
\end{align*}

\end{thm}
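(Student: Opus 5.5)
The plan is to follow the proof of Theorem \ref{ej312}, changing only the $x_0$-block. The locality $r=q-1$, the length $n=q^2(q^2-q)$ and the lower bound
$$d \geq n - \tfrac{q^2}{2}\, q^2 - (q-1)q^2 - (q-2)q^2$$
all come from Theorem \ref{teoremareversionado}, using $\deg(x_0)=\deg(x_1)=\deg(x_2)=q^2$ and the Remark giving $V\subset\mathcal{L}(lP_\infty)$. The dimension $k=(\frac{q^2}{2}+1)q(q-1)$ requires the evaluation map on $V$ to be injective. This follows once $d>0$: a nonzero $f\in V$ has at most $\deg(f)_\infty\le \frac{q^2}{2}q^2+(2q-3)q^2<n$ zeros in $\mathcal{B}$. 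Note $\frac{q^2}{2}+2q-3<q^2-q$ for $q\ge 8$, and $q=2^{2l+1}>2$ forces $q\geq 8$. So the real content is to show the bound on $d$ is attained, by exhibiting $h=h_0h_1h_2\in V$ with exactly $n-d$ zeros in $\mathcal{B}$, all distinct.

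The construction of $h$ proceeds in three steps.

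\emph{Step 1.} Choose $h_0=\prod_{\alpha\in H_0}(x_0-\alpha)$ with $|H_0|=\frac{q^2}{2}$. Fix a colour $S_1$. By Lemma \ref{lema3}(3) and the pairing argument, exactly $\frac{q^2}{2}$ places of $S$ have two places of colour $S_1$ above them, and the remaining $\frac{q^2}{2}-q$ have none. Theorem \ref{ej312} removed only those $\frac{q^2}{2}-q$ places. Here we must remove $q$ more places of $F_0$, and these necessarily have $S_1$-coloured places above them. So I would take $H_0$ to be the $\frac{q^2}{2}-q$ "colourless" $\alpha$ together with $q$ further $\alpha$ chosen so that the damage to colour $S_1$ is controlled. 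The natural choice is all of $B_i\cap S_0$ for one trace value. Equivalently, we may pick a set of $q$ elements of $S_0$ whose $S_1$-pairs are exactly known via the norm/trace relations $N(\alpha)=b_ib_k$, $Tr(\alpha)=b_i$.

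\emph{Step 2.} Choose $H_1$ of size $q-1$ among the surviving $x_1$-values, avoiding $S_1$, as before. The counting inequality $\frac{q^2}{2}-q>q$ must be re-checked with the enlarged $H_0$, since more $x_1$-values are now killed.

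\emph{Step 3.} Choose $H_2$ of size $q-2$ among the $x_2$-values above places with $x_1\in S_1$ that were not already killed. Each chosen root must still contribute exactly $q^2$ zeros in $\mathcal{B}$ disjoint from those of $h_0$ and $h_1$.

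The main obstacle is Step 3. After enlarging $H_0$, some places $Q$ with $x_1(Q)\in S_1$ lie over $H_0$. We need $q-2$ values $\gamma\in B_1$ such that $\{Q: x_2(Q)=\gamma\}$ still has exactly $q^2$ elements and none of them lies over $H_0$ or has $x_1(Q)\in H_1$. This is presumably where the hypothesis $q=2^{2l+1}$ (odd power of $2$) enters. I would try to show that in that case the trace-$b_1$ elements in $\F_{q^2}$, split by colour via $N(\gamma)=b_1b_h$, behave so that the $q$ extra $x_0$-values can be chosen with all their $S_1$-coloured preimages projecting into a set of $x_1$-values disjoint from a suitable $(q-2)$-subset of $B_1$. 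The plausible mechanism is a solvability criterion for $T^2+bT+c$ over $\F_q$, namely $Tr_{\F_q/\F_2}(c/b^2)$, which for odd $l'$ in $q=2^{l'}$ gives $Tr_{\F_q/\F_2}(1)=1$. Concretely, I would first try $H_0^{\mathrm{extra}}=\{\alpha: N(\alpha)/Tr(\alpha)=1\}$. I would then use $Tr_{\F_q/\F_2}(1)=1$ to prove that $T^2+T+1$ is irreducible over $\F_q$, and from that deduce the disjointness needed to pick $H_2$. Finally, summing the zero counts gives $\frac{q^2}{2}q^2+(q-1)q^2+(q-2)q^2=n-d$, as required.
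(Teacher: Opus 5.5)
Your treatment of $n$, $k$, the locality and the lower bound is fine. Your Step~1 with the extra $q$ elements taken to be the colour class of $1$ (i.e.\ $b_1=1$) is exactly the paper's choice $H_0=\{\alpha:P_\alpha\in\mathbb{H}_0\}\cup S_1$.

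The gap is Step~3, and it is not merely unproved: no choice of $H_0$ can make it work. Write $c(\alpha)=N(\alpha)/Tr(\alpha)$. A place $Q\in\mathcal{B}$ is a triple $(x_0(Q),x_1(Q),x_2(Q))=(\alpha,\gamma,\delta)$ with $Tr(\gamma)=c(\alpha)$ and $Tr(\delta)=c(\gamma)$. For $b,b'\in\F_q^*$, there exist $\gamma\in S_0$ with $Tr(\gamma)=b$ and $c(\gamma)=b'$ (exactly two of them) iff $T^2+bT+bb'$ is irreducible over $\F_q$, i.e.\ iff $\mathrm{Tr}_{\F_q/\F_2}(b'/b)=1$.

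Now fix $\delta\in S_0$ with $Tr(\delta)=b'$. The $x_0$-values of the $q^2$ zeros of $x_2-\delta$ in $\mathcal{B}$ are exactly the $\alpha\in S_0$ with $c(\alpha)\in A(b')=\{b\in\F_q^*:\mathrm{Tr}_{\F_q/\F_2}(b'/b)=1\}$. Since $|A(b')|=q/2$ and each colour class has $q$ elements, these are $q^2/2$ values. That leaves only $q^2/2-q$ elements of $S_0$ whose fibres avoid $x_2=\delta$.

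For $h_0$ to have $\frac{q^2}{2}q^2$ zeros you need $H_0\subset S_0$ with $|H_0|=q^2/2$, and such a set must meet the $q^2/2$ bad values. Hence $Z(h_0)\cap Z(x_2-\delta)\neq\emptyset$ for every $\delta\in S_0$. Consequently every product $h_0(x_0)h_1(x_1)h_2(x_2)\in V$ has strictly fewer than $n-d$ zeros in $\mathcal{B}$. In Theorem~\ref{ej312} this count is exactly tight: there $|H_0|=q^2/2-q$ and $H_2\subset B_1$ force $H_0=\{\alpha:P_\alpha\in\mathbb{H}_0\}$, which is why that proof works and this one cannot.

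The hypothesis $q=2^{2l+1}$ works against you rather than for you. $\mathrm{Tr}_{\F_q/\F_2}(1)=1$, equivalently irreducibility of $T^2+T+1$, is exactly what makes $|S_1\cap B_1|=2$. The $x_2$-values over $x_1\in S_1$ all have trace $b_1$, so $H_2\subset B_1$. With $S_1\subset H_0$, each $\delta\in H_2$ then shares the $2q$ zeros $(\alpha,\gamma,\delta)$, $\alpha\in S_1$, $\gamma\in S_1\cap B_1$, with $h_0$.

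The paper's proof makes precisely this mistake. It identifies $\{x_2(Q): x_1(Q)\in S_1,\ x_0(Q)\notin H_0\}$ with $S_1\setminus B_1$ and then asserts disjointness. But a value $\delta$ attained at some place off $H_0$ is also attained at places over $H_0$.

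So along this route only $n$, $k$, $r=q-1$ and $d\geq q^2(\frac{q^2}{2}-3q+3)$ are established. Equality would need a witness $f\in V$ that is not a product of univariate factors, and neither your proposal nor the paper supplies one.

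A minor point: $V\not\subset\mathcal{L}(lP_\infty)$, since $x_1$ and $x_2$ also have poles over $P_\beta$ for $\beta\in\F_q^*$. Your zero-count bound survives because those poles lie outside $\mathcal{B}$.
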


This extra condition, $q = 2^{2l+1}$, in Theorem \ref{ej313} is motivated by the next proposition:

\begin{pro}
Let $q = 2^{2l+1}$.
For all $\alpha \in S_0$, with $\alpha \in S_i$, there exist exactly two elements $\gamma \in S_i$ such that $\gamma^q + \gamma = \dfrac{\alpha^q}{\alpha^{q-1}+1}$; that is, $S_i \cap B_i = 2$ for all $1 \leq i \leq q-1$.
\end{pro}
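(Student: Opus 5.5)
The plan is to reduce the statement to a quadratic equation over $\F_q$ and then use the parity of the exponent in $q=2^{2l+1}$.

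First I rewrite the right-hand side. Multiplying numerator and denominator of $\frac{\alpha^q}{\alpha^{q-1}+1}$ by $\alpha$ (legitimate since $\alpha\neq 0$) gives $\frac{\alpha^{q+1}}{\alpha^q+\alpha}=\frac{N(\alpha)}{Tr(\alpha)}$. Since $\alpha\in S_i$, this equals $b_i$. So the equation to solve is $Tr(\gamma)=b_i$, and the set of solutions is exactly $B_i$.

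Next I describe $S_i\cap B_i$. An element $\gamma$ lies in $S_i\cap B_i$ precisely when $Tr(\gamma)=b_i$ and $N(\gamma)/Tr(\gamma)=b_i$, that is, when $Tr(\gamma)=b_i$ and $N(\gamma)=b_i^2$. By the argument in part 3 of Lemma \ref{lema3}, such $\gamma$ are exactly the roots of
$$p(T)=T^2+b_iT+b_i^2$$
(signs are irrelevant in characteristic $2$). Conversely, every root of $p$ lying in $\F_{q^2}\setminus\F_q$ has $\gamma^q$ as its conjugate root, so it has trace $b_i$ and norm $b_i^2$. It also lies in $S_0$ because $b_i\neq 0$. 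Hence it suffices to show that $p$ is irreducible over $\F_q$. Then $p$ has exactly two roots, both in $\F_{q^2}\setminus\F_q$, and $|S_i\cap B_i|=2$.

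The key step is irreducibility. I substitute $T=b_iU$, which is allowed because $b_i\in\F_q^*$. Then $p(b_iU)=b_i^2(U^2+U+1)$, so $p$ is irreducible over $\F_q$ if and only if $U^2+U+1$ is. The roots of $U^2+U+1$ are the primitive cube roots of unity, which generate $\F_4$. Thus $U^2+U+1$ has a root in $\F_q=\F_{2^m}$ if and only if $\F_4\subseteq\F_{2^m}$, i.e. if and only if $2\mid m$. Equivalently, $\F_q^*$ contains an element of order $3$ if and only if $3\mid 2^m-1$, which happens exactly when $m$ is even. For $q=2^{2l+1}$ the exponent $m=2l+1$ is odd, so the polynomial is irreducible, which gives the claim.

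There is no real obstacle. The only point needing care is that the reduction to $p(T)$ uses both directions of Lemma \ref{lema3}(3): solutions are roots of $p$, and roots of $p$ outside $\F_q$ are solutions.
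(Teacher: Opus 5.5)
Your proof is correct, but it takes a different route from the paper's.

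The paper works globally. It fixes a primitive element $\beta$ of $\F_{q^2}^*$ and uses $3\mid q+1$ (this is where $q=2^{2l+1}$ enters) to set $t=(q+1)/3$. It then checks that every $\beta^i$ with $i\equiv t$ or $2t \pmod{q+1}$ satisfies $(\alpha^q+\alpha)^2=\alpha^{q+1}$, i.e.\ $Tr(\alpha)^2=N(\alpha)$. A degree bound for $X^{2q}+X^{2}+X^{q+1}$, with $0$ as a double root, shows that these $2q-2$ elements are all the nonzero solutions. The step from this global count to ``exactly two in each $S_i$'' is left implicit. It needs one more argument, such as part 3 of Lemma~\ref{lema3} (each class contributes $0$ or $2$ elements, and $2q-2=2(q-1)$), or the fact that the solution set is stable under multiplication by $\F_q^*$.

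You instead argue class by class. You first make explicit that the right-hand side equals $b_i$, which the paper also leaves implicit. You then reduce membership in $S_i\cap B_i$ to being a root of $T^2+b_iT+b_i^2$, and rescale to the irreducibility of $U^2+U+1$ over $\F_{2^{2l+1}}$. This buys three things:
\begin{enumerate}
\item It avoids the counting and distribution steps entirely.
\item It gives the explicit description $S_i\cap B_i=\{b_i\omega,\,b_i\omega^2\}$ with $\omega\in\F_4\setminus\F_2$.
\item It shows why the parity hypothesis is necessary: for $q=2^{2l}$ the roots lie in $\F_q$, have trace $0$, and so $S_i\cap B_i=\emptyset$.
\end{enumerate}

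The two descriptions agree. Since $\omega^q=\omega^2$, you get $(b_i\omega)^{q-1}=\omega$. This is exactly the paper's condition that $\alpha^{q-1}$ be one of the two primitive cube roots of unity $\beta^{(q-1)t}$, $\beta^{2(q-1)t}$.
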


\begin{proof}
Since $2l+1$ is odd, we have $q+1 \equiv 0 \pmod{3}$.
Let $t = \frac{q+1}{3}$ and let $\beta \in \F_{q^2}$ be primitive.
Consider the elements of the form $\alpha = \beta^{i}$ with $i \equiv t \pmod{q+1}$ such that $i \not \equiv 0 \pmod{3}$, that is, $\alpha \not \in \F_q$.
Note that $q+1$ and $q-1$ are coprime and by the Chinese Remainder Theorem, $\alpha = \beta^i$ are well defined.

Now, by considering the exponent modulo $q^2-1$, we have:
$$i \equiv t \pmod{q+1} \iff (q-1)i \equiv (q-1)t \pmod{q^2-1} \iff i + (q-1)t \equiv qi \pmod{q^2-1},$$
so that for $\alpha = \beta^i$ it is $\overline{\alpha} = \alpha^q = \alpha \cdot \beta^{(q-1)t}$.
In a similar way, since $3i \equiv 3t \equiv 0 \pmod{q+1}$, it follows that $\alpha  \cdot \beta^{2(q-1)t} \in \F_q$ and $\alpha \cdot \beta^{3(q-1)t} = \alpha$.

In particular, $( \beta^{kt})^3 = 1$ and $\beta^{2kt} = \beta^{kt}+1$,
so that
\begin{align*} \left(\alpha^q+\alpha \right)^2 & = \left( \beta^{2qi} + \beta^{2i} \right) \\
& = \beta^{i} \cdot \beta^{i} \cdot \left( \beta^{2(q-1) i} + 1 \right) \\
& = \beta^{i} \cdot \beta^{i} \cdot \left( \beta^{(q-1)i} \right)\\
& = \alpha \cdot \alpha \cdot \left( \beta^{(q-1)t} \right) \\
& =  \alpha \cdot \overline{\alpha}= \alpha^{q+1}.
\end{align*}

Since the equation $\alpha^{2q} + \alpha^2 = \alpha^{q+1}$ has at most $2q$ solutions (and one of those is $\alpha=0$ with multiplicity $2$), we have that
$$\alpha^q + \alpha = \frac{\alpha^{q}}{\alpha^{q-1}+1}$$
if and only if $\alpha = \beta^{i}$ with $i \equiv t \pmod{q+1}$ or $i \equiv 2t \pmod{q+1}$, which account $2q-2$ solutions.

In conclusion, for each of the $q-1$ subsets $S_i \subset S_0$, there exists exactly two elements $\gamma, \overline{\gamma} \in S_i$ such that $\gamma^q + \gamma = \dfrac{\gamma^q}{\gamma^{q-1}+1}$.
\end{proof}

By identifying each $S_i$ with a color, we can characterize the behaviour of each $P \in \P(F_0)$ above some place in $S$, as follows: each place of color $S_i$ has $q$ places above, of $\frac{q}{2}$ different colors, with exactly two places of each color, and in particular exactly two places of color $S_i$.

 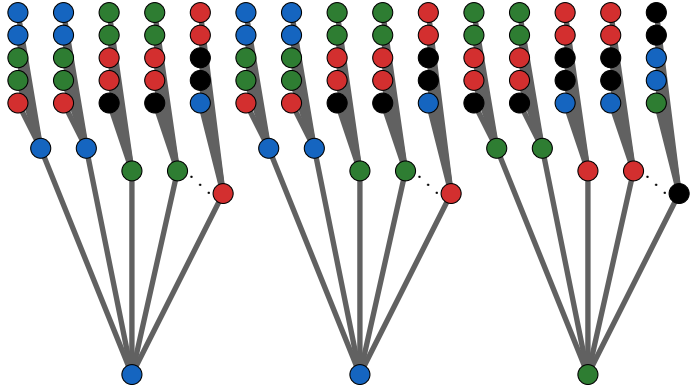
\begin{figure}[h!t]
	\begin{center}

		\begin{tikzpicture}[scale = 1.5]
			\clip(-1.4,-0.2) rectangle (5,3.4);
			\draw [line width=2pt,color=wrwrwr] (-0.8,2)-- (0,0);
			\draw [line width=2pt,color=wrwrwr] (-0.4,2)-- (0,0);
			\draw [line width=2pt,color=wrwrwr] (0,1.8)-- (0,0);
			\draw [line width=2pt,color=wrwrwr] (0.4,1.8)-- (0,0);
			\draw [line width=2pt,color=wrwrwr] (0.8,1.6)-- (0,0);
			\draw [line width=2pt,color=wrwrwr] (1.2,2)-- (2,0);
			\draw [line width=2pt,color=wrwrwr] (1.6,2)-- (2,0);
			\draw [line width=2pt,color=wrwrwr] (2,1.8)-- (2,0);
			\draw [line width=2pt,color=wrwrwr] (2.4,1.8)-- (2,0);
			\draw [line width=2pt,color=wrwrwr] (2.8,1.6)-- (2,0);
			\draw [line width=2pt,color=wrwrwr] (3.2,2)-- (4,0);
			\draw [line width=2pt,color=wrwrwr] (3.6,2)-- (4,0);
			\draw [line width=2pt,color=wrwrwr] (4,1.8)-- (4,0);
			\draw [line width=2pt,color=wrwrwr] (4.4,1.8)-- (4,0);
			\draw [line width=2pt,color=wrwrwr] (4.8,1.6)-- (4,0);
			\draw [line width=2pt,color=wrwrwr] (-1,2.4)-- (-0.8,2);
			\draw [line width=2pt,color=wrwrwr] (-1,2.6)-- (-0.8,2);
			\draw [line width=2pt,color=wrwrwr] (-1,2.8)-- (-0.8,2);
			\draw [line width=2pt,color=wrwrwr] (-1,3)-- (-0.8,2);
			\draw [line width=2pt,color=wrwrwr] (-1,3.2)-- (-0.8,2);
			\draw [line width=2pt,color=wrwrwr] (-0.6,2.4)-- (-0.4,2);
			\draw [line width=2pt,color=wrwrwr] (-0.6,2.6)-- (-0.4,2);
			\draw [line width=2pt,color=wrwrwr] (-0.6,3)-- (-0.4,2);
			\draw [line width=2pt,color=wrwrwr] (-0.6,3.2)-- (-0.4,2);
			\draw [line width=2pt,color=wrwrwr] (-0.6,2.8)-- (-0.4,2);
			\draw [line width=2pt,color=wrwrwr] (-0.2,2.4)-- (0,1.8);
			\draw [line width=2pt,color=wrwrwr] (-0.2,2.6)-- (0,1.8);
			\draw [line width=2pt,color=wrwrwr] (-0.2,2.8)-- (0,1.8);
			\draw [line width=2pt,color=wrwrwr] (-0.2,3)-- (0,1.8);
			\draw [line width=2pt,color=wrwrwr] (-0.2,3.2)-- (0,1.8);
			\draw [line width=2pt,color=wrwrwr] (0.2,2.4)-- (0.4,1.8);
			\draw [line width=2pt,color=wrwrwr] (0.2,2.6)-- (0.4,1.8);
			\draw [line width=2pt,color=wrwrwr] (0.2,2.8)-- (0.4,1.8);
			\draw [line width=2pt,color=wrwrwr] (0.2,3)-- (0.4,1.8);
			\draw [line width=2pt,color=wrwrwr] (0.2,3.2)-- (0.4,1.8);
			\draw [line width=2pt,color=wrwrwr] (0.6,2.4)-- (0.8,1.6);
			\draw [line width=2pt,color=wrwrwr] (0.6,2.6)-- (0.8,1.6);
			\draw [line width=2pt,color=wrwrwr] (0.6,3)-- (0.8,1.6);
			\draw [line width=2pt,color=wrwrwr] (0.6,2.8)-- (0.8,1.6);
			\draw [line width=2pt,color=wrwrwr] (0.6,3.2)-- (0.8,1.6);
			\draw [line width=2pt,color=wrwrwr] (1,2.4)-- (1.2,2);
			\draw [line width=2pt,color=wrwrwr] (1,2.6)-- (1.2,2);
			\draw [line width=2pt,color=wrwrwr] (1,2.8)-- (1.2,2);
			\draw [line width=2pt,color=wrwrwr] (1,3)-- (1.2,2);
			\draw [line width=2pt,color=wrwrwr] (1,3.2)-- (1.2,2);
			\draw [line width=2pt,color=wrwrwr] (1.4,2.4)-- (1.6,2);
			\draw [line width=2pt,color=wrwrwr] (1.4,2.6)-- (1.6,2);
			\draw [line width=2pt,color=wrwrwr] (1.4,3)-- (1.6,2);
			\draw [line width=2pt,color=wrwrwr] (1.4,3.2)-- (1.6,2);
			\draw [line width=2pt,color=wrwrwr] (1.4,2.8)-- (1.6,2);
			\draw [line width=2pt,color=wrwrwr] (1.8,2.4)-- (2,1.8);
			\draw [line width=2pt,color=wrwrwr] (1.8,2.6)-- (2,1.8);
			\draw [line width=2pt,color=wrwrwr] (1.8,2.8)-- (2,1.8);
			\draw [line width=2pt,color=wrwrwr] (1.8,3)-- (2,1.8);
			\draw [line width=2pt,color=wrwrwr] (1.8,3.2)-- (2,1.8);
			\draw [line width=2pt,color=wrwrwr] (2.2,2.4)-- (2.4,1.8);
			\draw [line width=2pt,color=wrwrwr] (2.2,2.6)-- (2.4,1.8);
			\draw [line width=2pt,color=wrwrwr] (2.2,2.8)-- (2.4,1.8);
			\draw [line width=2pt,color=wrwrwr] (2.2,3)-- (2.4,1.8);
			\draw [line width=2pt,color=wrwrwr] (2.2,3.2)-- (2.4,1.8);
			\draw [line width=2pt,color=wrwrwr] (2.6,2.4)-- (2.8,1.6);
			\draw [line width=2pt,color=wrwrwr] (2.6,2.6)-- (2.8,1.6);
			\draw [line width=2pt,color=wrwrwr] (2.6,3)-- (2.8,1.6);
			\draw [line width=2pt,color=wrwrwr] (2.6,2.8)-- (2.8,1.6);
			\draw [line width=2pt,color=wrwrwr] (2.6,3.2)-- (2.8,1.6);
			\draw [line width=2pt,color=wrwrwr] (3,2.4)-- (3.2,2);
			\draw [line width=2pt,color=wrwrwr] (3,2.6)-- (3.2,2);
			\draw [line width=2pt,color=wrwrwr] (3,2.8)-- (3.2,2);
			\draw [line width=2pt,color=wrwrwr] (3,3)-- (3.2,2);
			\draw [line width=2pt,color=wrwrwr] (3,3.2)-- (3.2,2);
			\draw [line width=2pt,color=wrwrwr] (3.4,2.4)-- (3.6,2);
			\draw [line width=2pt,color=wrwrwr] (3.4,2.6)-- (3.6,2);
			\draw [line width=2pt,color=wrwrwr] (3.4,2.8)-- (3.6,2);
			\draw [line width=2pt,color=wrwrwr] (3.4,3)-- (3.6,2);
			\draw [line width=2pt,color=wrwrwr] (3.4,3.2)-- (3.6,2);
			\draw [line width=2pt,color=wrwrwr] (3.8,2.4)-- (4,1.8);
			\draw [line width=2pt,color=wrwrwr] (3.8,2.6)-- (4,1.8);
			\draw [line width=2pt,color=wrwrwr] (3.8,2.8)-- (4,1.8);
			\draw [line width=2pt,color=wrwrwr] (3.8,3)-- (4,1.8);
			\draw [line width=2pt,color=wrwrwr] (3.8,3.2)-- (4,1.8);
			\draw [line width=2pt,color=wrwrwr] (4.2,2.4)-- (4.4,1.8);
			\draw [line width=2pt,color=wrwrwr] (4.2,2.6)-- (4.4,1.8);
			\draw [line width=2pt,color=wrwrwr] (4.2,2.8)-- (4.4,1.8);
			\draw [line width=2pt,color=wrwrwr] (4.2,3)-- (4.4,1.8);
			\draw [line width=2pt,color=wrwrwr] (4.2,3.2)-- (4.4,1.8);
			\draw [line width=2pt,color=wrwrwr] (4.8,1.6)-- (4.6,2.4);
			\draw [line width=2pt,color=wrwrwr] (4.6,2.6)-- (4.8,1.6);
			\draw [line width=2pt,color=wrwrwr] (4.8,1.6)-- (4.6,2.8);
			\draw [line width=2pt,color=wrwrwr] (4.6,3)-- (4.8,1.6);
			\draw [line width=2pt,color=wrwrwr] (4.8,1.6)-- (4.6,3.2);
			\begin{scriptsize}
				\draw [fill=rvwvcq] (2,0) circle (2.5pt);
				\draw [fill=sexdts] (4,0) circle (2.5pt);
				\draw [fill=rvwvcq] (0,0) circle (2.5pt);
				\draw [fill=rvwvcq] (-0.8,2) circle (2.5pt);
				\draw [fill=rvwvcq] (-0.4,2) circle (2.5pt);
				\draw [fill=sexdts] (0,1.8) circle (2.5pt);
				\draw [fill=sexdts] (0.4,1.8) circle (2.5pt);
				\draw [fill=dtsfsf] (0.8,1.6) circle (2.5pt);
				\draw [fill=sexdts] (2,1.8) circle (2.5pt);
				\draw [fill=rvwvcq] (1.6,2) circle (2.5pt);
				\draw [fill=rvwvcq] (1.2,2) circle (2.5pt);
				\draw [fill=sexdts] (2.4,1.8) circle (2.5pt);
				\draw [fill=dtsfsf] (2.8,1.6) circle (2.5pt);
				\draw [fill=sexdts] (3.2,2) circle (2.5pt);
				\draw [fill=sexdts] (3.6,2) circle (2.5pt);
				\draw [fill=dtsfsf] (4,1.8) circle (2.5pt);
				\draw [fill=dtsfsf] (4.4,1.8) circle (2.5pt);
				\draw [fill=black] (4.8,1.6) circle (2.5pt);
				\draw [fill=sexdts] (-1,2.6) circle (2.5pt);
				\draw [fill=sexdts] (-1,2.8) circle (2.5pt);
				\draw [fill=rvwvcq] (-1,3) circle (2.5pt);
				\draw [fill=rvwvcq] (-1,3.2) circle (2.5pt);
				\draw [fill=dtsfsf] (-1,2.4) circle (2.5pt);
				\draw [fill=dtsfsf] (-0.6,2.4) circle (2.5pt);
				\draw [fill=sexdts] (-0.6,2.6) circle (2.5pt);
				\draw [fill=sexdts] (-0.6,2.8) circle (2.5pt);
				\draw [fill=rvwvcq] (-0.6,3) circle (2.5pt);
				\draw [fill=rvwvcq] (-0.6,3.2) circle (2.5pt);
				\draw [fill=black] (-0.2,2.4) circle (2.5pt);
				\draw [fill=dtsfsf] (-0.2,2.6) circle (2.5pt);
				\draw [fill=dtsfsf] (-0.2,2.8) circle (2.5pt);
				\draw [fill=sexdts] (-0.2,3) circle (2.5pt);
				\draw [fill=sexdts] (-0.2,3.2) circle (2.5pt);
				\draw [fill=black] (0.2,2.4) circle (2.5pt);
				\draw [fill=dtsfsf] (0.2,2.6) circle (2.5pt);
				\draw [fill=dtsfsf] (0.2,2.8) circle (2.5pt);
				\draw [fill=sexdts] (0.2,3) circle (2.5pt);
				\draw [fill=sexdts] (0.2,3.2) circle (2.5pt);
				\draw [fill=rvwvcq] (0.6,2.4) circle (2.5pt);
				\draw [fill=black] (0.6,2.6) circle (2.5pt);
				\draw [fill=black] (0.6,2.8) circle (2.5pt);
				\draw [fill=dtsfsf] (0.6,3) circle (2.5pt);
				\draw [fill=dtsfsf] (0.6,3.2) circle (2.5pt);
				\draw [fill=dtsfsf] (1,2.4) circle (2.5pt);
				\draw [fill=sexdts] (1,2.6) circle (2.5pt);
				\draw [fill=sexdts] (1,2.8) circle (2.5pt);
				\draw [fill=rvwvcq] (1,3) circle (2.5pt);
				\draw [fill=rvwvcq] (1,3.2) circle (2.5pt);
				\draw [fill=dtsfsf] (1.4,2.4) circle (2.5pt);
				\draw [fill=sexdts] (1.4,2.6) circle (2.5pt);
				\draw [fill=rvwvcq] (1.4,3) circle (2.5pt);
				\draw [fill=rvwvcq] (1.4,3.2) circle (2.5pt);
				\draw [fill=sexdts] (1.4,2.8) circle (2.5pt);
				\draw [fill=black] (1.8,2.4) circle (2.5pt);
				\draw [fill=dtsfsf] (1.8,2.6) circle (2.5pt);
				\draw [fill=dtsfsf] (1.8,2.8) circle (2.5pt);
				\draw [fill=sexdts] (1.8,3) circle (2.5pt);
				\draw [fill=sexdts] (1.8,3.2) circle (2.5pt);
				\draw [fill=black] (2.2,2.4) circle (2.5pt);
				\draw [fill=dtsfsf] (2.2,2.6) circle (2.5pt);
				\draw [fill=dtsfsf] (2.2,2.8) circle (2.5pt);
				\draw [fill=sexdts] (2.2,3) circle (2.5pt);
				\draw [fill=sexdts] (2.2,3.2) circle (2.5pt);
				\draw [fill=rvwvcq] (2.6,2.4) circle (2.5pt);
				\draw [fill=black] (2.6,2.6) circle (2.5pt);
				\draw [fill=dtsfsf] (2.6,3) circle (2.5pt);
				\draw [fill=black] (2.6,2.8) circle (2.5pt);
				\draw [fill=dtsfsf] (2.6,3.2) circle (2.5pt);
				\draw [fill=black] (3,2.4) circle (2.5pt); 
				\draw [fill=dtsfsf] (3,2.6) circle (2.5pt);
				\draw [fill=dtsfsf] (3,2.8) circle (2.5pt);
				\draw [fill=sexdts] (3,3) circle (2.5pt);
				\draw [fill=sexdts] (3,3.2) circle (2.5pt);
				\draw [fill=black] (3.4,2.4) circle (2.5pt);
				\draw [fill=dtsfsf] (3.4,2.6) circle (2.5pt);
				\draw [fill=dtsfsf] (3.4,2.8) circle (2.5pt);
				\draw [fill=sexdts] (3.4,3) circle (2.5pt);
				\draw [fill=sexdts] (3.4,3.2) circle (2.5pt);
				\draw [fill=rvwvcq] (3.8,2.4) circle (2.5pt);
				\draw [fill=black] (3.8,2.6) circle (2.5pt);
				\draw [fill=black] (3.8,2.8) circle (2.5pt);
				\draw [fill=dtsfsf] (3.8,3) circle (2.5pt);
				\draw [fill=dtsfsf] (3.8,3.2) circle (2.5pt);
				\draw [fill=rvwvcq] (4.2,2.4) circle (2.5pt);
				\draw [fill=black] (4.2,2.6) circle (2.5pt);
				\draw [fill=black] (4.2,2.8) circle (2.5pt);
				\draw [fill=dtsfsf] (4.2,3) circle (2.5pt);
				\draw [fill=dtsfsf] (4.2,3.2) circle (2.5pt);
				\draw [fill=sexdts] (4.6,2.4) circle (2.5pt);
				\draw [fill=rvwvcq] (4.6,2.6) circle (2.5pt);
				\draw [fill=rvwvcq] (4.6,2.8) circle (2.5pt);
				\draw [fill=black] (4.6,3) circle (2.5pt);
				\draw [fill=black] (4.6,3.2) circle (2.5pt);
			\end{scriptsize}
			\node  at(4.6,1.75){\footnotesize{$\ddots$}};
			\node  at(2.6,1.75){\footnotesize{$\ddots$}};
			\node  at(0.6,1.75){\footnotesize{$\ddots$}};
		\end{tikzpicture}
		\caption{Diagram of three splitting places $P_j$ of $F_0$ in $F_2/F_0$, for $q=2^{2l} > 5$.
Each color, in each function field, represent a set $S_i$, for $1 \leq i \leq 4=q-1$, so each place has exactly two places of its same color above it. }\label{figuplaces}
\end{center}\end{figure}

We now proceed to prove the Theorem \ref{ej313}

\begin{proof}

Similarly as in the proof or the previous theorem, take $S_1 \subset S_0$ arbitrarily, with $q$ elements.
We can see that $| \{ P \in \P(F_1): \: x_1(P) \in S_1 \} | = q^2$ since for each $\beta \in S_1$ there are exactly $q$ places $P_{\alpha} \in \P(F_0)$ lying under some place $P$ with $x_1(P) = \beta$.
The last part of Lemma \ref{lema3} asserts those $q^2$ places come \textit{in pairs}: by thinking each set $S_i$ as one of $q-1$ \textit{colors},  we see that for each $P_{\alpha} \in \P(F_0)$ there exists two places $P, \: P'$ of color $S_1$ according to the value $x_1(P)$.
In particular, $P' = \sigma(P)$ where $\sigma$ is the only non trivial automorphism of $\F_{q^2}/\F_q$.

This pairing shows that, in consequence,  there exists exactly $\frac{q^2}{2}$ places $P_{\alpha} \in S$ with exactly $2$ places above each of them of color $S_1$.

Consider the remaining $q^2-q - \frac{q^2}{2}$ places, that is,
$$\mathbb{H}_0 = \{ P_{\alpha} \in \P(F_0) : \: x_1(P) \not\in S_1 \: \text{ for all } P|P_{\alpha}\}$$
of size $\frac{q^2}{2} - q$ and let
$$H_0  = \{\alpha \in S_0 : \: P_{\alpha} \in \mathbb{H}_0 \} \cup S_1$$
of size $\frac{q^2}{2} -q+q =\dfrac{q^2}{2}$.

The element
$$h_0 =  \prod_{\alpha \in H_0} (x_0 - \alpha) \in V$$
has exactly $q^2 \cdot \frac{q^2}{2} $ zeros in $\mathcal{B}$, precisely those places $Q$ lying above some place in $\mathbb{H}_0$ and those places $Q$ with $x_0(Q) \in S_1$.

Among the remaining places $Q \in \mathcal{B}$ such that $x_0(Q) \not\in H_0$, the value $x_1(Q)$ can attain $q^2 - q$ values, $q$ of those are in $S_1$.
On the other hand,
$$| \{ x_1(Q): \: Q \in \mathcal{B} \text{ and }   x_0(Q) \not\in H_0 \}| = q \cdot  \frac{q}{2}$$
since the elements of $S_0$ partitions naturally in $q-1$ subsets (\text{colors}) of size $q$, so that the places $P$ with $x_0(P) \in S_i$ split in the same way in $q$ new places.
By having chosen $\frac{q}{2}$ colors, we have the previous equality.
 
Since $q \geq 8$, we have
$$\left(q^2 -q \right)-q - q \left( \frac{q}{2} \right) = \frac{q^2}{2}-2q > q.$$
Hence there exist some proper subset $H_1 \subset S_0$ with $q$ elements such that
$$H_1 \subset \{ x_1(Q): \: Q \cap F_0 \not \in \mathbb{H}_0, \text{ and } x_1(Q) \not \in S_1 \}$$
which allow us to construct an element
$$h_1 = \prod_{\alpha \in H_1} (x_1 - \alpha) \in V$$
of degree $q-1$ that has $q^2(q-1)$ zeros in $\mathcal{B}$.

Finally, as no zero $Q$ of $h_0$ or $h_1$ verifies that $x_1(Q) \in S_1 \setminus B_1$, we have a non-empty subset
$$H_2 \subset \{x_2(Q): \: x_1(Q) \in S_1 \text{ and} x_0(Q) \not \in H_0 \} = S_1 \setminus B_1$$
with $q-2$ elements such that the element
$$h_2 = \prod_{\alpha \in H_2} (x_2 - \alpha) \in V$$
has exactly $q^2(q-2)$ zeros in $\mathcal{B}$.

It is clear that the zeros of $h_0, h_1$ and $h_2$ are distinct from each other,  that is, the sets  $\{Q \in \mathcal{B}: \: x_0(Q) \in H_0 \}$ and $\{Q \in \mathcal{B}: \: x_1(Q) \in H_1 \}$, $\{Q \in \mathcal{B}: \: x_2(Q) \in H_2 \}$ are disjoint by definition of $H_0, H_1$ and $H_2$.
Finally, the element
$$h = h_0h_1h_2 \in V$$
has exactly 
$$\frac{q^2}{2}  + (q-1) q^2 +  (q-2)q^2
$$zeros in $\mathcal{B}$,  showing that the lower bound described for $d$ is effectively attained.

\textcolor{blue}{The locality of $C$ is also $r=q-1$ as in the proof of Theorem \ref{teoremareversionado}.}
\end{proof}

In conclusion, we have shown two examples of LRC codes of length $n = q^4-\textcolor{blue}{q^3}$ and relative parameters $(\delta, R) = \left( \dfrac{1}{2} - \frac{3}{2q} + o(q^{-2}), \: \dfrac{1}{2} - \frac{1}{q} + \frac{1}{q^2} \right)$ and $(\delta, R) = \left( \dfrac{1}{2} - \frac{5}{2q} + o(q^{-2}), \:\dfrac{1}{2} + \frac{1}{q^2} \right)$ respectively.
It is worth to notice that, due to the good recursive behavior of the splitting of the places in the tower, the same results can be obtained for the extensions $F_0 \subset F_i \subset F_{i+1}$ in the tower, for any $i \geq 1$.

For the case of two steps, these two constructions have known dimension, distance, and were obtained as an evaluation code.

\begin{cor}\label{coro38}
Let $q = 2^{2l+1}$.
For $n=q^2(q^2-q)$, $1 \leq l \leq \frac{q^2}{2}$,  there exists an $\textcolor{blue}{[n,k,d]}$ LRC Evaluation Code with locality $r=q-1$ and known parameters $k = (l+1)(q^2-q)$ and $d = q^2 \left( q^2 - l - 3q+3 \right)$ , where each element $f \in V \subset \F_{q^2}(x_0,x_1,x_2)$ is a polynomial in $x_0, x_1, x_2$.
\end{cor}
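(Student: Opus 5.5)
Fix $l$ with $1 \leq l \leq \frac{q^2}{2}$ and take
$$V_l = \left\langle x_0^i x_1^j x_2^k : 0 \leq i \leq l, \ 0 \leq j \leq q-1, \ 0 \leq k \leq q-2 \right\rangle \subset F_2,$$
evaluated at the same set $\mathcal{B}$ of $n = q^2(q^2-q)$ rational places as in Theorem \ref{ej313}. The plan is to run the proofs of Theorems \ref{ej312} and \ref{ej313} with the $x_0$-degree bound $l$ as a free parameter. Membership of $V_l$ in $\mathcal{L}(mP_\infty)$ for $m = lq^2 + (q-1)q^2 + (q-2)q^2$ follows as in the Remark, because $P_\infty$ is totally ramified and $\deg(x_0)=\deg(x_1)=\deg(x_2)=q^2$. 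Theorem \ref{teoremareversionado} (with $m_1 = q$, $m_2 = q$) then gives the dimension $(l+1)\,q\,(q-1) = (l+1)(q^2-q)$, the locality $r = q-1$ via interpolation in $x_2$ on the fibres of $F_2/F_1$, and the bound
$$d \geq n - lq^2 - (q-1)q^2 - (q-2)q^2 = q^2(q^2 - l - 3q + 3).$$
We also need injectivity of the evaluation map; it follows from this bound once it is positive, which happens for $l \leq q^2/2$ and $q \geq 8$.

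The real content is showing the bound is attained, by exhibiting $h = h_0 h_1 h_2 \in V_l$ with exactly $lq^2 + (q-1)q^2 + (q-2)q^2$ zeros in $\mathcal{B}$. Fix the colour $S_1$ as in the proof of Theorem \ref{ej313}, and let $\mathbb{H}_0$ be the $\frac{q^2}{2}-q$ places of $F_0$ with no place of colour $S_1$ above them. The key step is choosing $H_0 \subset S_0$ with $|H_0| = l$ for every $l$ in the range, so that enough room remains for $H_1$ and $H_2$.

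I would choose $H_0$ greedily in a fixed order. First take elements of $\{\alpha : P_\alpha \in \mathbb{H}_0\}$, up to $\frac{q^2}{2}-q$ of them. Then, as in Theorem \ref{ej313}, take elements of $S_1$ until $|H_0| = l \leq \frac{q^2}{2}$. Set $h_0 = \prod_{\alpha\in H_0}(x_0-\alpha)$, which has exactly $lq^2$ zeros in $\mathcal{B}$ since every $P_\alpha$ splits completely in $F_2$.

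Next choose $H_1$ with $|H_1| = q-1$ among the values $x_1(Q)$ with $Q$ not over $H_0$ and $x_1(Q) \notin S_1$; then $h_1$ has $(q-1)q^2$ zeros disjoint from those of $h_0$. (The earlier proofs write $|H_1| = q$, but the degree in $x_1$ is at most $q-1$, so I take $q-1$ roots.) The counting inequality $(q^2-q) - q - q\cdot\frac{q}{2} \geq q$ from Theorem \ref{ej313} works uniformly, because removing fewer $F_0$-places only frees more $x_1$-values.

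Finally, the places over $F_0$-places outside $H_0$ with $x_1$-value in $S_1$ still exist, since at least the $\frac{q^2}{2}$ places carrying colour $S_1$ above them are not all used. Over such a place choose $q-2$ distinct values of $x_2$ and put $H_2$ equal to this set, so that $h_2 = \prod_{\alpha\in H_2}(x_2-\alpha)$ contributes $(q-2)q^2$ new zeros. Disjointness of the three zero sets holds by construction, so $h$ vanishes at exactly $n-d$ places of $\mathcal{B}$.

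I expect the main obstacle to be making the zero counts of $h_1$ and $h_2$ exact, namely $q^2$ zeros per root, for every $l$ rather than only $l = \frac{q^2}{2}$. Each chosen $x_1$-value (respectively $x_2$-value) must have all of its $q^2$ preimages in $\mathcal{B}$ avoiding the zeros of $h_0$. I would handle this with the colour structure of Proposition \ref{propdecompo}: places of $F_0$ of a fixed colour share the same fibre of $x_1$-values. So I would always remove whole colour classes, or unions of the conjugate pairs $\{\gamma,\gamma^q\}$, first, and then verify that the remaining fibres are untouched. If that fails for intermediate $l$, the fallback is to note that the zero count of $h$ is the sum of the three counts minus overlaps. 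One then shows directly that the minimum weight equals $n - \deg$ of the pole divisor, reducing $l$ from $\frac{q^2}{2}$ by dropping roots of $h_0$ that lie in $S_1$ and compensating through the choice of $H_2$.
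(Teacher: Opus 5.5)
Your plan is the paper's plan: lower the $x_0$-degree to $l$, read off $k$, $r$ and $d\ge q^2(q^2-l-3q+3)$ from Theorem \ref{teoremareversionado}, then exhibit $h=h_0h_1h_2$ with pairwise disjoint zero sets. However, the step producing $h_2$ breaks down for every $l>\frac{q^2}{2}-q$, and no choice of $H_0,H_1,H_2$ can repair it.

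The zeros of $x_2-\delta$ are not confined to the $F_1$-place over which you picked $\delta$. They are all $Q\in\mathcal{B}$ with $x_2(Q)=\delta$. Put $\kappa=Tr(\delta)$, and call $b=N(\alpha)/Tr(\alpha)$ the colour of $P_\alpha$. Since $Tr(x_{i+1}(Q))=N(x_i(Q))/Tr(x_i(Q))$, the place $P_\alpha$ has an $F_1$-place of colour $\kappa$ above it iff $T^2+bT+\kappa b$ is irreducible over $\F_q$, i.e.\ iff $\mathrm{Tr}_{\F_q/\F_2}(\kappa/b)=1$. In that case it has exactly two such places, and over each of them lies exactly one $Q$ with $x_2(Q)=\delta$. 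Hence the $q^2$ zeros of $x_2-\delta$ lie two at a time over exactly $\frac q2\cdot q=\frac{q^2}{2}$ distinct places of $F_0$. Disjointness from the zeros of $h_0$ therefore forces $l\le(q^2-q)-\frac{q^2}{2}=\frac{q^2}{2}-q$.

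In your greedy order this is visible directly. As soon as elements of $S_1$ enter $H_0$, the fact that $\mathrm{Tr}_{\F_q/\F_2}(1)=1$ when $q$ is an odd power of $2$ means each such $P_\alpha$ carries two zeros of every $x_2-\delta$ with $\delta\in B_1$. So $h$ has at least $2(q-2)\bigl(l-\frac{q^2}{2}+q\bigr)$ fewer distinct zeros in $\mathcal{B}$ than $n-d$, and ``disjointness by construction'' fails. Your fallback cannot help, since it starts from $l=\frac{q^2}{2}$, where the obstruction is worst. The paper's proof has the same hole: it imports this step from Theorem \ref{ej313}, whose $H_2\subset B_1$ meets the fibres over $S_1\subset H_0$ in exactly this way.

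What your argument does establish, once tidied, is the range $1\le l\le\frac{q^2}{2}-q$. Take $H_0\subseteq\{\alpha: P_\alpha\in\mathbb{H}_0\}$ and $H_2\subset B_1$ with $|H_2|=q-2$. Take $H_1$ to be $q-1$ elements $\gamma\notin S_1$ whose trace is one of the $\frac q2$ colours absent from $\mathbb{H}_0$; there are $\frac{q^2}{2}-q\ge q-1$ such $\gamma$. The right condition on $\gamma$ is that the whole class $\{\alpha: N(\alpha)/Tr(\alpha)=Tr(\gamma)\}$ misses $H_0$, because all $q^2$ zeros of $x_1-\gamma$ lie over it. Asking only that $\gamma=x_1(Q)$ for some $Q$ not over $H_0$ is not enough.

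For $\frac{q^2}{2}-q<l\le\frac{q^2}{2}$ you only have the lower bound on $d$. Equality would need an element of $V_l$ not of product form with $n-d$ simple zeros in $\mathcal{B}$, and neither your proposal nor the paper supplies one.

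A minor point: $V_l\not\subset\mathcal{L}(mP_\infty)$. Indeed $x_1$ (resp.\ $x_2$) also has poles over $x_0\in\F_q^*$ (resp.\ $x_1\in\F_q^*$), and at the place of $F_2$ over $P_\infty$ one has $v(x_1)=-q$ and $v(x_2)=-1$. Use $G=l(x_0)_\infty+(q-1)(x_1)_\infty+(q-2)(x_2)_\infty$ instead. It has degree $m$ and support disjoint from $\mathcal{B}$, so your bound is unaffected.
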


\begin{proof}
\textcolor{blue}{As we have shown in Theorems \ref{ej313} and \ref{ej312}, we obtain an LRC code with known dimension and minimum distance. By slightly reducing the dimension of $V$, we obtain an LRC code with length $n= q^4-q^3$, dimension $k =  (l+1)(q^2-q)$, since the vector space $V$ is spanned by $$\left\{x_0^{i}x_1^{j}x_2^{k}: \: 0 \leq i \leq l, \: 0 \leq j \leq q-1, \: 0 \leq k \leq q-2\right\},$$ and minimum distance $d \geq n - l\deg(x_0) - (q-1)\deg(x_1) - (q-2)\deg(x_2) = q^2(q^2-l-3q+3)$. But in this cases we also know that there exist a function $h \in V$ that is a polynomial in $x_0, x_1, x_2$ with such many zeros in $\mathcal{B}$, so $d  = q^2(q^2-l-3q+3)$.
}\end{proof}

\section{Other Constructions }\label{segunda}\

In this section we will continue using towers of function fields in order to build more sequences of LRC codes

\subsection{The Garcia - Stichtenoth tower over $\F_4$}

Consider the tower of function fields
$$\mathcal{T} = (F_0, F_1, F_2, \ldots),$$
given by $F_0 = \F_4(x_0)$, and $F_i = F_{i-1}(x_i)$ with equation
$$x_i ^2 + x_i = \dfrac{x_{i-1}^2}{x_{i-1} + 1}.$$

\begin{pro}\label{ej1} There is a LRC AG Code $C$ in the function field $F_j/F_0$, with length $n=2^j$, dimension $k=2^{j-2} = \frac{1}{4}n$, minimum distance $d=2$ and locality $r=1$.
Moreover, such $C$ is equivalent to a repetition code.
\end{pro}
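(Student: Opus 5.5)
The plan is to realize $C$ as an evaluation code of the type in Theorem \ref{teoremareversionado} with $m_j=2$. Since $m_j-2=0$, the variable $x_j$ does not appear, and every codeword is constant on the two places of $F_j$ above each place of $F_{j-1}$; this gives both locality $r=1$ and the repetition structure.

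\emph{Step 1: splitting.} Write $\omega$ for a generator of $\F_4^*$, so $\omega^2+\omega+1=0$. For $x_{i-1}\in\{\omega,\omega^2\}$ one checks $x_{i-1}^2/(x_{i-1}+1)=1$, and the equation $T^2+T=1$ has exactly the two roots $\omega,\omega^2$. By induction on $i$, the place $P_\omega$ of $F_0$ therefore splits completely in $F_j/F_0$ into $2^j$ rational places. I take $\mathcal{B}$ to be this set of places, so $n=2^j$. The map
$$Q\mapsto (x_1(Q),\ldots,x_j(Q))\in\{\omega,\omega^2\}^j$$
is a bijection, which identifies $\mathcal{B}$ with a cube. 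All $x_i$ have their poles only above $P_\infty$, so no place of $\mathcal{B}$ is a pole of any element of the space chosen below.

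\emph{Step 2: the space.} I take
$$V=(x_1+\omega^2)\cdot\left\langle x_2^{e_2}\cdots x_{j-1}^{e_{j-1}}:\ e_i\in\{0,1\}\right\rangle\subset F_{j-1}.$$
Each $x_i$ takes only two distinct values on $\mathcal{B}$. Multilinear interpolation (the functions $1$ and $x_i$ separate the two values) then shows that the bracketed space restricts to the full space of functions of $(x_2,\ldots,x_{j-1})$ on the cube. The factor $x_1+\omega^2$ vanishes exactly where $x_1=\omega^2$ and equals $\omega+\omega^2=1$ where $x_1=\omega$. Consequently the evaluation map is injective, and the code consists exactly of the vectors
\begin{equation*}
c(Q)=g\bigl(x_2(Q),\ldots,x_{j-1}(Q)\bigr)\ \text{if } x_1(Q)=\omega,\qquad c(Q)=0\ \text{otherwise},
\end{equation*}
with $g$ arbitrary. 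This gives $k=2^{j-2}$.

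\emph{Step 3: distance, locality and equivalence.} Each support pattern of a nonzero codeword is a union of fibres $\{x_1=\omega,\ x_2,\ldots,x_{j-1}\ \text{fixed}\}$, and each such fibre has exactly two places, differing only in $x_j$. Hence every nonzero codeword has weight at least $2$. Taking $g$ to be a delta function (available by interpolation) gives weight exactly $2$, so $d=2$. For locality, the recovery set of $Q$ is its sibling $Q'$ with $Q\cap F_{j-1}=Q'\cap F_{j-1}$, exactly as in Theorem \ref{teoremareversionado}, and $c(Q)=c(Q')$. Deleting the identically zero coordinates shows that $C$ is, up to coordinate permutation and zero padding, the $2$-fold repetition of the full space $\F_4^{2^{j-2}}$, i.e.\ equivalent to a repetition code. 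Finally, since $V\subset\mathcal{L}(\ell P_\infty)$ for suitable $\ell$, with $P_\infty$ totally ramified, $C$ is contained in an AG code $C_{\mathcal L}(D,\ell P_\infty)$; presenting it with a divisor $G$ that forces the zero on $x_1=\omega^2$ is a point I would arrange separately.

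The main obstacle I expect is matching the indexing and parameters in the statement: that $n=2^j$ counts only the places over $P_\omega$ and not over both $P_\omega$ and $P_{\omega^2}$, and that the intended $V$ yields exactly $d=2$ rather than $4$. If the intended $V$ is instead the plain monomial space of Theorem \ref{teoremareversionado}, I would redo Step 2 with the relation $x_i^2=x_i+1$ on $\mathcal{B}$ to compute its image.
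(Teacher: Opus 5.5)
Your route differs from the paper's, and your computations are correct. The splitting via $\omega^2/(\omega+1)=\omega^4/(\omega^2+1)=1$, the bijection of $\mathcal{B}$ with $\{\omega,\omega^2\}^j$, the interpolation giving $k=2^{j-2}$, the fibre argument for $d=2$ and the sibling recovery set all hold, and they are checked more carefully than in the paper. The paper simply invokes Theorem \ref{teoremareversionado} with all $m_i=2$:
it takes $\mathcal{B}$ to be all places of $F_j$ over both $P_\alpha$ and $P_{\alpha^2}$, uses the plain monomial space in $x_0,\dots,x_{j-1}$ (no $x_j$, since $m_j-2=0$), reads off $r=m-1=1$, and treats the result only as an evaluation code.

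That route produces an honest repetition code. Yours reaches the literal triple $[2^j,2^{j-2},2]$ only through the factor $x_1+\omega^2$, which makes half of the coordinates identically zero. This has three costs:
\begin{enumerate}
\item Your code is not an instance of Theorem \ref{teoremareversionado}.
\item It is dominated by the unpadded code of $\langle x_1^{e_1}\cdots x_{j-1}^{e_{j-1}}\rangle$ on the same $\mathcal{B}$, which is a genuine two-fold repetition code $[2^j,2^{j-1},2]$.
\item ``Equivalent to a repetition code'' holds only after deleting zero columns, and that is not an equivalence of codes. You should state this weakening explicitly rather than fold it into ``zero padding''.
\end{enumerate}

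On the AG point you are no worse off than the paper, which also exhibits only an evaluation code. However, your sentence that all $x_i$ have their poles only above $P_\infty$ is false. Since $x_0^2/(x_0+1)$ has a pole at $P_1$, $x_1$ has a pole above $P_1$. Regularity on $\mathcal{B}$ follows directly from $x_i(Q)\in\{\omega,\omega^2\}$ instead. Accordingly, $G$ should be a divisor supported on the poles of $x_1,\dots,x_{j-1}$, none of which lies in $\mathcal{B}$, rather than a multiple of $P_\infty$.

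Your closing suspicion is justified. A uniform $t$-fold repetition code has $k=n/t$ and $d=t$, so $k=n/4$, $d=2$ and equivalence to such a code cannot all hold at once. The paper's natural construction actually gives:
\begin{itemize}
\item $[2^{j+1},2^{j-1},4]$ if $x_0$ is omitted;
\item $[2^{j+1},2^{j},2]$ if $e_0\le 1$ is allowed.
\end{itemize}
The paper's own proof lists length $2\cdot 2^j$, then dimension $2\cdot 2^{j-2}$, and finally ``$[2^j,2^{j-1},2^{j-1}]$'', so it does not establish the stated triple either. The statement's parameters are inconsistent, and your padded code meets the numbers at the cost of the repetition claim.
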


\begin{proof}
This proposition is a direct consequence of the application of the construction of Theorem \ref{teoremareversionado}, and so $r = m-1 = 2-1 = 1$ and $d = 2$.
Indeed, we can characterize the rational places of $\mathcal{T}$ as follows.
The rational places of $F_0$ are precisely $\P(F_0) = \{P_0, P_1, P_{\alpha}, P_{\alpha+1}, P_{\infty} \}$, whereas on \cite{gusti} it is proved that the only rational places of $F_0$ that splits completely on $\mathcal{T}$ are $P_{\alpha}$ and $P_{\alpha+1}$ for $\alpha^2+\alpha+1=0$.
Moreover, for each $i \geq 0$, the place $P_{\infty} \in F_i$ is totally ramified, so there exists only one place $Q_{\infty} \in F_{i+1}$ above  $P_{\infty}$.
Noticing that $[F_{i+1}:F_i] = 2$, we can construct a linear code of length $2 \cdot 2^j$ and dimension $k = 2 \cdot 2^{j-2}$ considering the set of places $\mathcal{B} = \{ Q \mid P_{\alpha} \} \cup \{ Q \mid P_{\alpha^2} \} $ for $P_{\alpha+1} = P_{\alpha^2}$ in $F_0$, where each one splits in $2^j$ total places in $F_j$.

Since $V \subset \mathcal{L}(Q_{\infty})$, for $D = 2^{j+1}Q_{\infty}$ we can consider the code \textcolor{blue}{$C = C_2(\mathcal{B},V)$} as an evaluation code of places of $F_{i+1}$; that will have the same parameters shown, e.g., $C$ is \textcolor{blue}{ an $[2^j, 2^{j-1}, 2^{j-1}]$} linear code with $r=m-1=1$.
\end{proof}

Furthermore, for more examples of codes in this function field, we may think in subtle variations, as we show next.

 Consider some subtower $\mathcal{E} = (E_0, E_1, E_2, \ldots)$ of the tower $\mathcal{T}$ in which $E_0 = F_0 = \F_4(x_0)$ and $E_j \subset F_{2j}$.

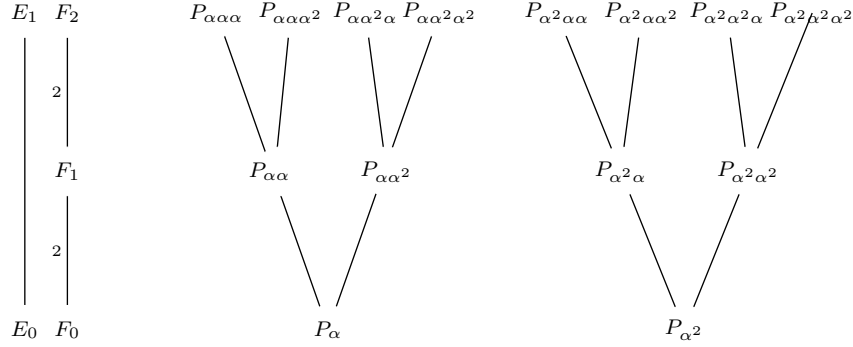
\begin{figure}[h!t]
\begin{center}
\begin{tikzpicture}[scale=1.4]

\draw[line width=0.5 pt](-0.9,0)--(-0.9,3);
\draw[line width=0.5 pt](-1.3,0)--(-1.3,3);
\draw[white, fill=white](-0.9,0) circle (0.23 cm);
\draw[white, fill=white](-0.9,1.5) circle (0.23 cm);
\draw[white, fill=white](-0.9, 3) circle (0.23 cm);
\draw[white, fill=white](-1.3, 0) circle (0.23 cm);
\draw[white, fill=white](-1.3, 3) circle (0.23 cm);
\node at(-0.9,0){\footnotesize{$F_0$}};
\node at(-0.9,1.5){\footnotesize{$F_1$}};
\node at(-1.3,0){\footnotesize{$E_0$}};
\node at(-1.3,3){\footnotesize{$E_1$}};
\node at(-0.9,3){\footnotesize{$F_2$}};
\node at(-1, 2.25){\tiny{$2$}};
\node at(-1, 0.75){\tiny{$2$}};

\draw[line width=0.5 pt](1.55,0)--(0.5,3);
\draw[line width=0.5 pt](1.55,0)--(2.6,3);
\draw[line width=0.5 pt](4.9,0)--(3.7,3);
\draw[line width=0.5 pt](4.9,0)--(6.1,3);

\draw[line width=0.5 pt](1.05,1.5)--(1.2,3);
\draw[line width=0.5 pt](2.1,1.5)--(1.9,3);
\draw[line width=0.5 pt](4.3,1.5)--(4.5,3);
\draw[line width=0.5 pt](5.5,1.5)--(5.3,3);

\draw[white, fill=white](0.5,3) circle (0.23 cm);
\draw[white, fill=white](1.2,3) circle (0.23 cm);
\draw[white, fill=white](1.9,3) circle (0.23 cm);
\draw[white, fill=white](2.6,3) circle (0.23 cm);
\draw[white, fill=white](3.7,3) circle (0.23 cm);
\draw[white, fill=white](4.5,3) circle (0.23 cm);
\draw[white, fill=white](5.3,3) circle (0.23 cm);
\draw[white, fill=white](6,1,3) circle (0.23 cm);
\node at(0.5,3){\footnotesize{$P_{\alpha \alpha \alpha}$}};
\node at(1.2,3){\footnotesize{$P_{\alpha \alpha  \alpha^2 }$}};
\node at(1.9,3){\footnotesize{$P_{\alpha  \alpha^2  \alpha}$}};
\node at(2.6,3){\footnotesize{$P_{\alpha  \alpha^2   \alpha^2 }$}};
\node at(3.7,3){\footnotesize{$P_{ \alpha^2  \alpha \alpha}$}};
\node at(4.5,3){\footnotesize{$P_{ \alpha^2  \alpha  \alpha^2 }$}};
\node at(5.3,3){\footnotesize{$P_{ \alpha^2   \alpha^2  \alpha}$}};
\node at(6.1,3){\footnotesize{$P_{ \alpha^2   \alpha^2   \alpha^2 }$}};

\draw[white, fill=white](1.1,1.5) circle (0.23 cm); 
\draw[white, fill=white](2.1,1.5) circle (0.23 cm);
\draw[white, fill=white](4.3,1.5) circle (0.23 cm);
\draw[white, fill=white](5.5,1.5) circle (0.23 cm);
\node at(1,1.5){\footnotesize{$P_{\alpha \alpha}$}};
\node at(2.1,1.5){\footnotesize{$P_{\alpha  \alpha^2 }$}};
\node at(4.3,1.5){\footnotesize{$P_{ \alpha^2  \alpha}$}};
\node at(5.5,1.5){\footnotesize{$P_{ \alpha^2   \alpha^2 }$}};

\draw[white, fill=white](1.55,0) circle (0.23 cm);
\draw[white, fill=white](4.9,0) circle (0.23 cm);
\node at(1.55,0){\footnotesize{$P_{\alpha}$}};
\node at(4.9,0){\footnotesize{$P_{ \alpha^2 }$}};
\end{tikzpicture}
\caption{Diagram of the splitting of the rational places $P$ en $F_0$ that splits completely.
From left to right, we may label the rational places of $F_2$ as $P_1, P_2, \ldots, P_8$.}\label{figuf4}
\end{center}\end{figure}

\begin{rem} \phantom{ }

\begin{enumerate}   \label{ej2}

\item  \label{ej2a} Considering the extension $E_1/E_0$, we obtain an $[8, 4, 2]$  LRC Code $C$ with $r = 1$.
Indeed, for
$$V =  \left \langle 1 ,x,y,xy \right \rangle,$$
then any $f \in V$ verifies that $f(P_{1}) = f(P_2)$, $f(P_3)=f(P_4)$, $f(P_5)=f(P_6)$ and $f(P_7)=f(P_8)$, where $y \in F_1$ verifies $y^2+y = \frac{x^2}{x+1}$.

\item \label{ej2b} Considering the vector space $W = \left \langle 1 ,x,z,xz \right \rangle $ instead of $V = \left \langle 1, x, y ,xy \right \rangle$ again we have an $[8, 4, 2]$ LRC code $C$ with $r = 1$, where $z \in F_2$ verifies $z^2+z = \frac{y^2}{y+1}$.
\end{enumerate}
\end{rem}

\subsection{The van der Geer - van der Vlugt tower over $\F_8$}

Consider the tower of function fields
$$\mathcal{T} = (F_0, F_1, F_2, \ldots),$$
given by the relations $F_0 = \F_8(x_0)$ and $F_{i+1}= F_{i}(x_{i+1})$ with recursive equation
$$x_{i+1}^2+x_{i+1} = x_i + 1 + \frac{1}{x_i} .$$

\begin{pro}[\cite{garciaetal}]
The set $S$ of rational places in $F_0$ that splits completely in $\mathcal{T}$ (that is, rational places $P$ that splits completely on $F_i/F_0$ for every $i$) is characterized by $S = \{P_{\alpha}: \: \alpha \in \F_8 \setminus \F_2 \}$.
In particular if  $\F_8 = \F_2(\beta)$ with $\beta^3 = \beta+1$, we have\footnote{Here we use $\beta$ as a generator of $\F_8^*$. Later on, we will use it as an arbitrary element of $\F_8 \setminus \F_2$.}
$$S_0 = \F_8 \setminus \F_2 =  \{\beta,\: \beta+1,\: \beta^2,\: \beta^2+1, \: \beta^2+\beta,\: \beta^2+\beta+1\},$$
and consequently
$$S = \{P_{\beta}: \: \beta \in S_0\}.$$
\end{pro}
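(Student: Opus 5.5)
The plan is to reduce the splitting question to a trace computation in $\F_8$. Each step $F_{i+1}/F_i$ is an Artin--Schreier extension $x_{i+1}^2+x_{i+1}=u_i$ with $u_i = x_i+1+\frac{1}{x_i}$. I will use the standard criterion: if $Q$ is a rational place of $F_i$ that is not a pole of $u_i$, then $Q$ splits into two rational places of $F_{i+1}$ if and only if $Tr_{\F_8/\F_2}(u_i(Q))=0$. Otherwise $Q$ is inert, i.e.\ it lies under a single place of degree $2$. Above each split place, the values $x_{i+1}$ are the two roots $\gamma,\gamma+1\in\F_8$ of $T^2+T=u_i(Q)$. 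I will then argue by induction on $i$, with the following hypothesis: every place of $F_i$ above $P_\alpha$, $\alpha\in S_0$, is rational, and its $x_i$-value lies in $\F_8\setminus\F_2$.

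\textbf{Key computation.} For $\alpha\in\F_8\setminus\F_2$ I want to show $Tr(\alpha+1+\alpha^{-1})=0$.
\begin{enumerate}
\item Since $[\F_8:\F_2]=3$ is odd, $Tr(1)=1$.
\item The elements of $\F_8\setminus\F_2$ are exactly the roots of $T^3+T+1$ and of $T^3+T^2+1$. Their traces are the coefficients of $T^2$, namely $0$ and $1$ respectively.
\item These two polynomials are reciprocal to each other, so $\alpha\mapsto\alpha^{-1}$ exchanges their root sets. Hence $Tr(\alpha)+Tr(\alpha^{-1})=1$.
\item Therefore $Tr(\alpha+1+\alpha^{-1})=1+1=0$, and the place splits into two rational places.
\end{enumerate}
To keep the induction going, I need the new values $\gamma$ to avoid $\F_2$. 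If $\gamma\in\F_2$, then $\gamma^2+\gamma=0$, which forces $\alpha^2+\alpha+1=0$. That polynomial has no roots in $\F_8$, since its roots lie in $\F_4$ and $\F_4\not\subset\F_8$. So $\gamma\in\F_8\setminus\F_2$. Also $\gamma\neq 0$, so no pole of $u_{i+1}$ occurs at these places. This closes the induction and shows that every $P_\alpha$ with $\alpha\in S_0$ splits completely in every $F_i$.

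\textbf{Converse.} The remaining rational places of $F_0$ are $P_0$, $P_1$ and $P_\infty$.
\begin{enumerate}
\item At $P_0$ and $P_\infty$, the function $u_0$ has a simple pole. A pole of odd order is ramified in an Artin--Schreier extension, so these places do not split already in $F_1/F_0$.
\item At $P_1$ we have $u_0(P_1)=1+1+1=1$, and $Tr(1)=1$. So $P_1$ is inert in $F_1/F_0$ and does not split completely.
\end{enumerate}
Hence $S$ is exactly $\{P_\alpha:\alpha\in S_0\}$. The explicit listing of $S_0$ in terms of $\beta$ with $\beta^3=\beta+1$ is then just the enumeration of $\F_8\setminus\F_2$.

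\textbf{Main obstacle.} The only delicate point is keeping the induction closed. At each stage the new coordinate must stay in $\F_8\setminus\F_2$, so that the same trace identity can be reapplied at the next level; this rests on the non-existence of roots of $T^2+T+1$ in $\F_8$. The inversion trick for the trace is what makes the identity uniform in $\alpha$, so I expect no case-by-case check over the six elements of $S_0$ to be needed.
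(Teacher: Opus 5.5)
Your proof is correct and complete. The Artin--Schreier criterion at unramified rational places, the identity $Tr(\alpha)+Tr(\alpha^{-1})=1$ from the reciprocal pair $T^3+T+1$, $T^3+T^2+1$, and the exclusion of $\gamma\in\F_2$ via $\F_4\not\subset\F_8$ all hold. So do the converse cases: $u_0$ has a simple pole at $P_0$ and $P_\infty$, so these places are totally ramified, and $u_0(P_1)=1$ has trace $1$, so $P_1$ is inert.

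The paper gives no proof of this proposition; it simply quotes it from the literature. The closest in-paper material is the directed graph $G$ on $S_0$ in Figure~\ref{grafof8}. It is obtained by computing $\alpha+1+\alpha^{-1}$ and solving $\gamma^2+\gamma=\alpha+1+\alpha^{-1}$ for each of the six elements. This shows by enumeration that every $\alpha\in S_0$ has exactly two successors, both again in $S_0$, which is exactly the closure property driving your induction.

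Your route replaces that case check with a uniform trace identity. It also supplies the converse, which the graph does not address, so you get a self-contained proof of the full characterization. The enumeration, in return, records which successors actually occur. The paper needs that finer edge structure afterwards to count the zeros of $x_i-\beta$ in $\mathcal{B}$.

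One point you use tacitly: counting $2^i$ rational places above $P_\alpha$ as complete splitting requires $[F_i:F_{i-1}]=2$ at every level. This is part of $\mathcal{T}$ being a tower. It can also be checked directly: at the unique place above $P_\infty$ one has $v(x_i)=-1$, hence $v(u_i)=-1$, at every level, which forces each step to be a genuine degree-$2$ extension.
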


\begin{pro}  \label{ej37} Let $\mathcal{B} = \{  Q \in \P(F_2) : \: Q \mid P \text{ for some } P \in S \}$ and let $V$ be the vector space spanned by $\{x_0^i x_1 ^j : \: 0 \leq i \leq 4, \: 0 \leq j \leq 1 \}$.
Then the linear code $\C_2 = \C( \mathcal{B}, V)$ is a $\textcolor{blue}{[24, 10, 4]}$ LRC code with \textcolor{blue}{$r = 1$.}
\end{pro}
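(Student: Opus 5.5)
The plan is to read this as the special case of Theorem \ref{teoremareversionado} with $j=2$, $m_1=m_2=2$ and $l=4$, and then show that the lower bound on $d$ is attained by an explicit product of linear factors, as in the proofs of Theorems \ref{ej312} and \ref{ej313}.

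\textbf{Length, locality and the bound on $d$.} Each of the $6$ places $P_\beta\in S$ splits completely in $F_2/F_0$, which has degree $4$. Hence $|\mathcal{B}|=24$.

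In the notation of Theorem \ref{teoremareversionado}, the exponent of $x_2$ runs over $0\le e_2\le m_2-2=0$, so $V\subset F_1$. Every $f\in V$ therefore takes the same value at the two places of $F_2$ lying over a common place of $F_1$. Taking $A_i$ to be the partner place over the same place of $F_1$ gives locality $r=m_2-1=1$.

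For the bound, I will compute pole degrees in $F_2$. The place at infinity of $F_0$ is the pole of $x_0$, and $[F_2:F_0]=4$, so $\deg(x_0)=4$ in $F_2$. From $x_1^2+x_1=x_0+1+1/x_0$ we get $[F_1:\F_8(x_1)]=2$, so $x_1$ has degree $2$ in $F_1$ and degree $4$ in $F_2$. I also need that no place of $\mathcal{B}$ is a pole of $x_0$ or $x_1$. This holds because $x_0(Q)=\beta\neq 0,\infty$ for $Q\in\mathcal{B}$, and then $x_1(Q)$ is a root of a polynomial over $\F_8$.

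The theorem then gives
\[
d\ge 24-4\deg(x_0)-\deg(x_1)=24-16-4=4.
\]
Since $d>0$, the evaluation map is injective on $V$, so $k=\dim V=5\cdot 2=10$.

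\textbf{Sharpness.} I will construct $h\in V$ with exactly $20$ zeros in $\mathcal{B}$. Fix $\alpha\in S_0$. Then $\alpha^{-1}\in S_0$ and $\alpha^{-1}\neq\alpha$, because $\alpha\notin\F_2$. Let $H_0=S_0\setminus\{\alpha,\alpha^{-1}\}$, which has $4$ elements, and set $h_0=\prod_{\beta\in H_0}(x_0-\beta)$. This function vanishes at the $16$ places of $\mathcal{B}$ lying over $H_0$.

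Next, let $\gamma=x_1(Q)$ for some $Q\in\mathcal{B}$ over $P_\alpha$. The places of $F_1$ where $x_1=\gamma$ are exactly those with $x_0+1/x_0=\gamma^2+\gamma+1$. Since $\alpha+1/\alpha$ is symmetric in $\alpha$ and $\alpha^{-1}$, these are one place over $P_\alpha$ and one over $P_{\alpha^{-1}}$. Each of them splits into $2$ places of $F_2$, so $h_1=x_1-\gamma$ has $4$ zeros in $\mathcal{B}$. All of these lie over $\{\alpha,\alpha^{-1}\}$, so they are disjoint from the zeros of $h_0$.

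Then $h=h_0h_1\in V$ has $20$ zeros in $\mathcal{B}$, so its codeword has weight $4$ and $d=4$.

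\textbf{Expected obstacle.} The main point to verify is the degree computation for $x_1$ and the claim that exactly two places of $F_1$, namely those over $P_\alpha$ and $P_{\alpha^{-1}}$, have $x_1=\gamma$. To settle it I will argue directly from the quadratic $x_0^2+(\gamma^2+\gamma+1)x_0+1=0$, whose roots are $\alpha$ and $\alpha^{-1}$.
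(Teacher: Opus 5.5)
Your proposal is correct and follows the same route as the paper: you get $|\mathcal{B}|=24$ from complete splitting, bound $d\ge 24-4\deg(x_0)-\deg(x_1)=4$ using $[F_2:\F_8(x_0)]=[F_2:\F_8(x_1)]=4$, exhibit an element $(x_1-\gamma)\prod_{\beta}(x_0-\beta)\in V$ with exactly $20$ zeros in $\mathcal{B}$, and deduce $r=1$ from $V\subset F_1$. Your explicit choice $H_0=S_0\setminus\{\alpha,\alpha^{-1}\}$ also makes precise a condition the paper leaves implicit: its element $(x_1-\beta_1)(x_0-\beta_2)\cdots(x_0-\beta_5)$ has $20$ zeros only when the omitted value $\beta_6$ equals $\beta_1^{-1}=\beta_1^2+1$.
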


\begin{proof}
Since $[F_2:F_0] = 4$, then $|\mathcal{B}| = 4 \cdot 6 = 24$.
Moreover, it is clear that $\dim V = 10$ since its generators are linearly independent.
Finally, due to the construction of $C_2$, we have
$$d \geq 24 -4 \cdot \deg(x_0) - \deg(x_1).$$

 Since $[F_2: \F_8(x_0)] = 4 = [F_2: \F_8(x_1)] = 4$ we have
$$d \geq 24 -16 -4 = 4.$$
On the other hand, for $S_0 = \{ \beta_1, \ldots, \beta_6\}$ and $\F_8 = \F_2(\beta_1)$ with $\beta_1 ^3 =\beta_1 + 1$, the element $f = (x_1 - \beta_1)(x_0-\beta_2)(x_0-\beta_3)(x_0 - \beta_4)(x_0 - \beta_5)$ has exactly $20$ simple zeros in $F_2$.

\textcolor{blue}{Finally, $r=1$ as in proposition \ref{ej1} since $[F_i:F_{i-1}]=2$ for $i \geq 1$. In fact, for any $Q \in \mathcal{B}$, and $Q'$ the other place above $P = Q \cap F_2$, and for any $f \in V$, it is $f(Q) = f(Q')$. }
\end{proof}

\begin{pro} \label{ej38} 

Let $\mathcal{B} = \{ Q \in \P(F_3) : \: Q \mid P \text{ for some } P \in S \}$ and let $V$ be the vector space spanned by $\{x_0^i x_1 ^j x_2^k: \: 0 \leq i \leq 4, \: 0 \leq j,k \leq 1 \}$.
Then the code $C_3 = C( \mathcal{B}, V)$ is a $[48, 20, d]$ LRC code with $d \leq 4$, \textcolor{blue}{and $r = 1$}.
\end{pro}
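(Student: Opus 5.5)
We follow the same pattern as the proof of Proposition \ref{ej37}, now one level higher in the tower. The plan is to establish the length, the dimension and locality, and then exhibit a nonzero codeword of small weight to get $d\le 4$.

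\textbf{Length.} Every $P\in S$ splits completely in $F_3/F_0$, and $[F_3:F_0]=8$. Hence $|\mathcal{B}|=8\cdot 6=48$.

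\textbf{Dimension.} The monomials $x_0^ix_1^jx_2^k$ with $0\le i\le 4$ and $j,k\in\{0,1\}$ are linearly independent over $\F_8$. The reason is that $[F_1:F_0]=[F_2:F_1]=2$, so $\{1,x_1\}$ and $\{1,x_2\}$ are bases of the successive extensions and $\{1,x_1,x_2,x_1x_2\}$ is an $F_0$-basis of $F_2$. This gives $\dim V=5\cdot 2\cdot 2=20$.

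Injectivity of evaluation on $\mathcal{B}$ needs a separate check. By the proof of Theorem \ref{teoremareversionado}, it suffices that a nonzero $f\in V$ has fewer than $48$ zeros in $\mathcal{B}$, i.e. that $\deg(f)_\infty<48$. Since $P_\infty$ is totally ramified in $F_3/F_0$, we have $\deg x_0=\deg x_1=\deg x_2=8$ in $F_3$. Therefore $\deg(f)_\infty\le 4\cdot 8+8+8=48$, which is exactly on the boundary. I expect this to be the main obstacle. I would handle it by noting that each $f\in V$ lies in $F_2$. Every $Q\in\mathcal{B}$ lies over a place of $\mathcal{B}_2=\{Q'\in\P(F_2):Q'\mid P,\ P\in S\}$, and each such $Q'$ has exactly two places of $\mathcal{B}$ above it, with $f(Q)=f(Q')$. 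So the evaluation map factors through evaluation on $\mathcal{B}_2$, which has $24$ places. In $F_2$ we have $\deg f\le 4\cdot 4+4+4=24$. If equality of weight $0$ held, $f$ would vanish on all of $\mathcal{B}_2$. That would force the divisor of $f$ to be $\sum_{Q'\in \mathcal{B}_2}Q'-24P_\infty$, which I would rule out directly by looking at the leading term $x_0^4x_1x_2$ and the valuations at the places over $x_0=0$ or $1$. Alternatively, if this is too delicate, I would state $k=20$ via the same linear-independence argument used for Proposition \ref{ej37}.

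\textbf{Locality.} Since $[F_3:F_2]=2$ and $V\subset F_2$, the two places of $\mathcal{B}$ above any $Q'\in\mathcal{B}_2$ carry the same value of every $f\in V$. Thus each coordinate is recovered from a single other coordinate, giving $r=1$, exactly as in Proposition \ref{ej1}.

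\textbf{Upper bound on $d$.} Take the same function used in Proposition \ref{ej37},
$$f=(x_1-\beta_1)(x_0-\beta_2)(x_0-\beta_3)(x_0-\beta_4)(x_0-\beta_5)\in V.$$
In $F_2$ this $f$ vanishes at $20$ of the $24$ places of $\mathcal{B}_2$, so it vanishes at $40$ places of $\mathcal{B}$. Its weight is therefore $48-40=8$, giving only $d\le 8$.

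To reach $d\le 4$, I would use the extra factor allowed by the $x_2$ direction. Among the $4$ remaining places of $\mathcal{B}_2$, all of which lie over $P_{\beta_1}$ or $P_{\beta_6}$ and have $x_1\ne\beta_1$, choose a value $\gamma$ taken by $x_2$ at two of them, using the pairing structure as in Figure \ref{figuf4}. Then replace $f$ by $f\cdot(x_2-\gamma)$ with one $x_0$-factor removed. The counting has to be redone so that exactly $22$ places of $\mathcal{B}_2$ are zeros, i.e. $44$ places of $\mathcal{B}$, which yields $d\le 4$. Finding such a $\gamma$ amounts to an explicit computation in $\F_8$ with $\beta^3=\beta+1$, using the recursion $x_{i+1}^2+x_{i+1}=x_i+1+1/x_i$ to list the values of $x_1$ and $x_2$ above each $P_\beta$. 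I expect this explicit search to be the second delicate point.
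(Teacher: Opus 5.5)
Your plan follows the paper's: length $8\cdot 6=48$, $\dim V=20$ from linear independence of the monomials, $r=1$ because $V\subset F_2$ and $[F_3:F_2]=2$, and an explicit product of linear factors in $x_0,x_1,x_2$ with $44$ zeros in $\mathcal{B}$. However, the decisive step fails as you stated it.

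\textbf{The removed $x_0$-factor.} You propose to multiply the function of Proposition~\ref{ej37} by $x_2-\gamma$ \emph{and remove one $x_0$-factor}. With only three $x_0$-factors you cannot reach $22$ zeros in $\mathcal{B}_2$:
\begin{itemize}
\item a factor $x_0-c$ kills the $4$ places of $\mathcal{B}_2$ over $P_c$;
\item a factor $x_1-c$ or $x_2-c$ kills at most $4$ places of $\mathcal{B}_2$, because every vertex of $G$ has in-degree $2$, so only $2$ paths of length $1$ and $4$ paths of length $2$ end at $c$.
\end{itemize}
So your product has at most $12+4+4=20$ zeros in $\mathcal{B}_2$, i.e.\ at most $40$ in $\mathcal{B}$, and you are back to $d\le 8$. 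The removal is also unnecessary. Since $x_0^4x_1x_2$ is one of the spanning monomials of $V$, $f\cdot(x_2-\gamma)\in V$ with all four $x_0$-factors kept. Then the $20$ old zeros in $\mathcal{B}_2$ persist and the leftover places with $x_2=\gamma$ are added.

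\textbf{The choice of $\gamma$.} You left this as a search, but no search is needed. The leftover places of $\mathcal{B}_2$ lie over $P_{\beta_1}$ and $P_{\beta_6}$, the two in-neighbours of $\beta_1$ in $G$; that is exactly why $x_1-\beta_1$ killed the other places over them. The two roots of $y^2+y=c$ differ by $1$, so all four leftover places have $x_1=\beta_1+1$. Their $x_2$-values are therefore $\gamma,\gamma+1,\gamma,\gamma+1$, where $\gamma$ is a root of $y^2+y=\beta_1+(\beta_1+1)^{-1}$. For $\beta_1=\beta$ this gives $\beta_6=\beta^2+1$ and $\gamma=\beta^2+\beta$. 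The function
$$(x_1-\beta)(x_2-\beta^2-\beta)\prod_{c\in\{\beta+1,\beta^2,\beta^2+\beta,\beta^2+\beta+1\}}(x_0-c)$$
has $32+8+4=44$ zeros in $\mathcal{B}$ and is nonzero at the remaining $4$ places. This is the paper's function $(x_0-\beta)(x_0-\beta-1)(x_0-\beta^2-\beta)(x_0-\beta^2-1)(x_1-\beta^2)(x_2-\beta)$, with $\beta^2$ in the role of your $\beta_1$.

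\textbf{Injectivity of evaluation.} The paper only asserts $\dim V=20$, so you are right to flag this point. Your divisor argument is misdirected, though: in this tower $(f)_\infty$ is not supported at $P_\infty$ alone. For instance, $x_1$ has a pole over $P_0$, since $x_1^2+x_1=x_0+1+1/x_0$. A fibrewise argument settles it instead. Write $f=\sum_{j,k\le 1}p_{jk}(x_0)x_1^jx_2^k$ with $\deg p_{jk}\le 4$. On the four places of $\mathcal{B}_2$ over $P_a$, the pairs $(x_1,x_2)$ are $(b,c),(b,c+1),(b+1,c'),(b+1,c'+1)$. Vanishing at all four forces $p_{jk}(a)=0$ for every $j,k$. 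Since this holds for all six $a\in S_0$ and $\deg p_{jk}\le 4<6$, we get $f=0$.
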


\begin{proof}
As in the previous proposition, we have that $\dim V = 20$.
Now,
$$d \geq 48 - 4 \deg(x_0) - \deg(x_1) - \deg(x_2),$$
or equivalently,
$$d \geq 48 - 32 - 8 -8 \geq 0.$$

We are going to show that $d\geq 4$ with a similar argument as before.
In fact, the element
$$f = (x_0-\beta)(x_0-\beta-1)(x_0-\beta^2-\beta)(x_0-  \beta^2-1)(x_1- \beta^2)(x_2 - \beta)
$$has $4\cdot 2^3 + 2 \cdot 2^2 + 2 \cdot 2^1 = 32 + 8 +4 = 44$ distinct simple zeros in $F_3$, showing that $d \leq 4$.
\end{proof}

In order to justify the construction of such $f$ in the last proof, we may do the following analysis, to also suspect that $d \geq 4$.

Consider the directed graph $G$ whose vertices are the elements of $S_0 = \F_8 \setminus \F_2$ and has edges $( \alpha, \beta)$ with $\beta^2 + \beta = \alpha + 1 + \frac{1}{\alpha}$.

\begin{figure}[h!t]
\begin{center}
\begin{tikzpicture}[scale=1.4]

\draw[white, fill=white](1.8,2) circle (0.23 cm);
\node (A) at(1.8,2){\footnotesize{$\beta$}};
\draw[white, fill=white](4,2) circle (0.23 cm);
\node (B) at(4,2){\footnotesize{$\beta+1$}};
\draw[white, fill=white](4.8,1) circle (0.23 cm);
\node (C) at(4.8,1){\footnotesize{$\beta^2+\beta$}};
\draw[white, fill=white](4,0) circle (0.23 cm);
\node (D) at(4,0){\footnotesize{$ \beta^2 + \beta +1$}};
\draw[white, fill=white](1.8,0) circle (0.23 cm);
\node (E) at(1.8,0){\footnotesize{$ \beta^2 $}};
\draw[white, fill=white](1,1) circle (0.23 cm);
\node (F) at(1,1){\footnotesize{$\beta^2+1$}};

\draw [->] (A) -- (B);
\draw [->] (B) -- (C);
\draw [->] (C) -- (D);
\draw [->] (D) -- (E);
\draw [->] (E) -- (F);
\draw [->] (F) -- (A);
\draw [->] (B) -- (D);
\draw [->] (D) -- (F);
\draw [->] (F) -- (B);
\draw[->] (A) to[out=45,in=135,looseness=5] (A);
\draw[->] (C) to[out=315,in=45,looseness=5] (C);
\draw[->] (E) to[out=315,in=225,looseness=5] (E);
\end{tikzpicture}
\caption{Graph relating the elements of $S_0 = \F_8 \setminus \F_2$, with an edge from $\alpha$ to $\beta$ for each pair that verifies the recursive equation that defines the tower $\mathcal{T}$ .}\label{grafof8}
\end{center}\end{figure}
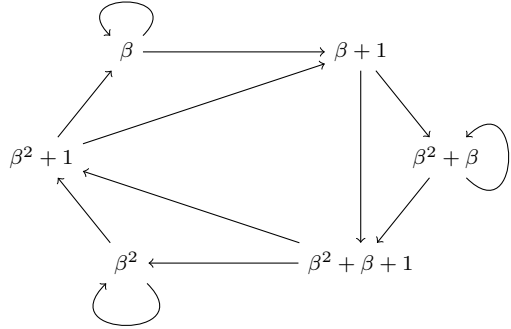

We can see that each vertex has exactly two outgoing edges, and this is because each place $P_{\alpha} \in \P(F_0)$ with $\alpha \in S_0$ splits in two places.
In particular $\beta^2+ \beta = (\beta+1)^2 + (\beta+1)$ for every $\beta \in S_0$, and analogously each vertex has exactly two ongoing edges, where $\alpha+1 + \frac{1}{\alpha} = \alpha^{-1}+1 + \frac{1}{\alpha^{-1}}$ for every $\alpha \in S_0$.

It is convenient to note the following: \begin{lem} $Q \in \mathcal{B} \subset F_3$ is a zero of $x_i - \beta$ for $0 \leq i \leq 3$ if and only if there exists $Q \mid P_{\alpha} \in S$ such that there is a path of length $i$ from $\alpha$ to $\beta$ in $G$.
\end{lem}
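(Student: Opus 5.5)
The plan is to prove the lemma by induction on $i$, tracking the residue classes $x_k(Q)$ of the generators along the chain of places $Q \cap F_0 \subset Q\cap F_1 \subset \cdots \subset Q$. Here $Q \in \mathcal{B}$ lies above $P_\alpha \in S$, so $\alpha = x_0(Q)$. Write $Q_k = Q \cap F_k$ and $\gamma_k = x_k(Q_k) = x_k(Q) \in \F_8$. These values are well defined and finite: every place of $\mathcal{B}$ lies over a completely splitting rational place, so it is rational and not a pole of any $x_k$. The first thing to check is that each $\gamma_k$ again lies in $S_0$. This follows from the characterization of $S$ in the previous proposition applied recursively: since $S$ splits completely in the whole tower, the values $x_k(Q)$ never hit $0$, $1$ or $\infty$.

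The key step is the relation between consecutive values. Evaluate the defining equation $x_{k+1}^2+x_{k+1} = x_k + 1 + \frac{1}{x_k}$ at $Q$. Both sides are regular at $Q$, since $\gamma_k \neq 0$. This gives $\gamma_{k+1}^2+\gamma_{k+1} = \gamma_k + 1 + \gamma_k^{-1}$, so $(\gamma_k,\gamma_{k+1})$ is an edge of $G$. Consequently $\alpha=\gamma_0,\gamma_1,\dots,\gamma_i$ is a directed walk of length $i$ in $G$ from $\alpha$ to $x_i(Q)$. Now $Q$ is a zero of $x_i-\beta$ exactly when $\gamma_i=\beta$. This proves the direction ``zero $\Rightarrow$ path'', with the path read off from $Q$ itself.

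For the converse, I will show by induction that every walk $\alpha=\gamma_0\to\gamma_1\to\cdots\to\gamma_i=\beta$ in $G$ is realized by some place. Suppose a rational place $Q_k$ of $F_k$ above $P_\alpha$ has $x_j(Q_k)=\gamma_j$ for $j\le k$. The two places of $F_{k+1}$ above $Q_k$ (complete splitting, degree $2$) give two roots of $T^2+T=\gamma_k+1+\gamma_k^{-1}$. These are distinct: the polynomial is separable since its derivative is $1$. Hence they are exactly the two out-neighbours of $\gamma_k$ in $G$, which are $\delta$ and $\delta+1$. So one of these places has $x_{k+1}$-value $\gamma_{k+1}$, which extends the realization. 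Finally, lift the resulting place of $F_i$ to any place of $F_3$ above it; such a place lies in $\mathcal{B}$ and is a zero of $x_i-\beta$.

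I expect the main obstacle to be the distinctness claim: that the two extensions of $Q_k$ have different $x_{k+1}$-values. If two places $Q\neq Q'$ above $Q_k$ shared the value $x_{k+1}$, the residues of the generator $x_{k+1}$ would not separate them. I would handle this by the standard Kummer-type argument, \cite[Thm.~3.3.7]{stichtenoth}. Since $T^2+T-c$ has two distinct roots in $\F_8$ for $c=\gamma_k+1+\gamma_k^{-1}$ with $\gamma_k\in S_0$, the place $Q_k$ splits and the two places correspond bijectively to the two roots. This also recovers the fact that each vertex of $G$ has outdegree $2$. The statement should then be read with $Q$ ranging over places with $Q\cap F_0=P_\alpha$, and ``path'' meaning walk, since $G$ has loops.
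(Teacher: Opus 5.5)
Your proposal is correct and takes the paper's route: evaluate $x_{k+1}^2+x_{k+1}=x_k+1+1/x_k$ at $Q$ and induct on $i$ from $x_0(Q)=\alpha$. You are more careful than the paper: you read the statement existentially (with walks, since $G$ has loops), and you use Kummer's theorem to show that the two places above $Q_k$ realize both out-neighbours $\delta,\delta+1$, a point the paper's one-line ``if and only if'' glosses over.

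The one soft spot is getting $\gamma_k\in S_0$ from the characterization of $S$ ``applied recursively''. It is simpler and rigorous to get it inside the same induction. For $\gamma_k\in S_0$, the right-hand side is regular at $Q$ with value $(\gamma_k^2+\gamma_k+1)/\gamma_k\neq 0$, because $T^2+T+1$ has no root in $\F_8$. This forces $x_{k+1}$ to be regular at $Q$ with $\gamma_{k+1}\in\F_8\setminus\{0,1\}$.
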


\begin{proof}
Since $x_{i+1}^2+x_{i+1} = x_i + 1 + \frac{1}{x_i}$, we have that $x_i(Q) = \alpha$ if and only if $x_{i+1}(Q) = \beta,$ where $(\alpha, \beta)$ is some edge in $G$.
Since $x_0(Q) = x_0(P_{\alpha}) = \alpha$, the lemma follows by induction on $i$.
\end{proof}

We notice then that each function $x_0 - \beta$ has exactly $8  = 2^3$ zeros $Q \in \mathcal{B}$ (i.e., $\deg(x_0- \beta)_0 = 8$ and the $8$ places above $P_{\beta}$ are zeros of $x_0-\beta)$.
For $i=1$, each element $x_1 - \beta$ has exactly $2 \cdot 2^2 = 8$ zeros $Q \in \mathcal{B}$ since there are exactly $2$ paths of length $1$ that ends in $\beta$.

It is also worth to notice that $x_i- \beta$ and $x_i - \beta - 1$ has the same zeros in $\mathcal{B}$, and similarly $x_i - \alpha$ and $x_{i+1} - \beta$ for $\beta^2 + \beta = \alpha + 1 + \frac{1}{\alpha}$.
Moreover, since for each pair $\alpha, \beta \in G$ there exist some path of length less or equal than $3$, we get that $x_3 - \alpha$ and $x_0 - \beta$ will have a common zero in $\mathcal{B}$.

This analysis enables us to conjecture that any element spanned by 
$$1, \: x_0, \: x_0^2,\: x_0^3,\: x_0^4$$
would have at least $4$ zeros in common with any of the elements $x_1 - \alpha$ and $x_2 - \beta$.
In particular, we can conjecture that no element in $V$ has more than  $48-4 = 44$ zeros in $\mathcal{B}$, where $V$ is the vector space spanned by
$$\{x_0^i x_1 ^j x_2^k: \: 0 \leq i \leq 4, \: 0 \leq j,k \leq 1 \}.$$

\section{Conclusions} \label{conclusiones}

In this last section, we want to consider the codes constructed and compare its relative parameters $\delta$ and $R$, together with their relative locality $\frac{r}{n}$.
Recall that $q \geq 8$ is an arbitrarily large power of $2$.
\begin{figure}[ht] \label{table}
\renewcommand{\arraystretch}{1.5}
\begin{tabular}{ |c|c|c|c|c|c| } \hline
Example  &$n$ & $\delta$ & $R$ & $\frac{r}{n}$ \\ \hline
  \ref{ej312}  & $ q^4-q^2 $ & $ \frac{1}{2}- \frac{3}{2q} + o(q^{-2}) $ & $ \frac{1}{2}-\frac{1}{q} + \frac{1}{q^2}$ & $ \frac{1}{q^3} $ \\ \hline
  \ref{ej313}  & $ q^4-q^2 $ & $ \frac{1}{2}- \frac{5}{2q} + o(q^{-2}) $ & $ \frac{1}{2} + \frac{1}{q^2}$ & $ \frac{1}{q^3} $ \\ \hline 
 \ref{ej1} & $ 2\cdot 2^i $ & $ \frac{1}{2^i} $ & $ \frac{1}{4} $ & $ \frac{1}{2^{i+1}} $ \\ \hline 
\ref{ej2}, \ref{ej2a} & $ 8 $ & $ \frac{1}{4} $ & $ \frac{1}{2} $ & $ \frac{1}{8} $ \\ \hline 
\ref{ej2}, \ref{ej2b} & $ 8 $ & $ \frac{1}{4} $ & $ \frac{1}{2} $ & $ \frac{1}{8} $ \\ \hline 
\ref{ej37} & $ 24 $ & $ \frac{1}{6} $ & $ \frac{5}{12} $ & $ \frac{1}{24} $ \\ \hline
 \ref{ej38} & $ 48 $ & $ \leq \frac{1}{12} $ & $ \frac{5}{12} $ & $ \frac{1}{48} $ \\ \hline 

 \end{tabular}
 \caption{Table with the studied examples} \label{tabla}
\end{figure}

For comparison purposes, we plotted in figure \ref{ejemplotabla} the values of $\delta$ and $R$ with different colors: cyan for Examples \ref{ej1}, blue for Examples \ref{ej2}, red for Examples \ref{ej37} and \ref{ej38}, and finally black for Examples \ref{ej312} and \ref{ej313} with $q=8$ and $q=32$.

For completeness, we also plot Examples \ref{ej312} and \ref{ej313} together with other possible examples that follow from Remark $3.6$ of \cite{lrcontowers}, and the bounds derived in \cite{bargtamobounds} and \cite{bargtamovladut} in figure \ref{ejemploq32}.
These results manifest the existence of LRC Codes with good relative parameters, for fixed $q$ and increasing length $n$, in contrast of the codes shown where $q$ is arbitrarily large and $n \approx q^4$.
In particular, there can be constructed explicit examples as in \ref{ej313} with smaller subspaces $V' \subset V$, obtaining relative distances $\delta$ arbitrarily close to $1$, and transmission rate $R$ lying above strict bounds improving a notorious bound of Barg-Tamo-Vladut.

\begin{pro}[Barg-Tamo-Vladut, \cite{bargtamovladut}, Remark 6.3]
For each $q = 2^{2l}$ there exist a family of LRC codes, of locality $r = \sqrt{q}-1$, whose relative parameters verify

\begin{equation}\label{BTVbound} R\geq \frac{r}{r+1}\left(1-\delta-\frac{3}{q+1}\right).\end{equation}
\end{pro}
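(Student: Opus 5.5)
The plan is to reproduce the Barg--Tamo--Vladut argument in the setting of this paper, that is, as an instance of Theorem \ref{teoremareversionado} in which the Riemann--Roch space $\mathcal{L}(G)$ on the lower field replaces the monomials $x_0^{e_0}\cdots x_{j-1}^{e_{j-1}}$. Write $q=Q^2$ with $Q=2^l$, so that $r+1=Q$. Over $\F_q$ I will take the Garcia--Stichtenoth tower $x_{i+1}^{Q}+x_{i+1}=x_i^{Q}/(x_i^{Q-1}+1)$. This is the tower $\mathcal{W}$ of Section \ref{primera} with $q$ replaced by $Q$, so each step $F_{i+1}/F_i$ has degree $Q=r+1$. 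I will use two known properties of this tower. First, the places $P_\beta$ with $\beta\in\F_q\setminus\F_Q$ split completely in every $F_i$. Second, $N(F_i)/g(F_i)\to Q-1$, which is the Drinfeld--Vl\u{a}du\c{t} bound.

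For a fixed level $i$, the code is built as follows.
\begin{enumerate}
\item Let $\mathcal{B}\subset\P(F_{i+1})$ be the set of rational places lying over completely split places of $F_0$. Put $n=|\mathcal{B}|=(r+1)N_i'$, where $N_i'$ is the number of such places in $F_i$.
\item Choose a divisor $G$ of $F_i$ supported at $P_\infty$ and define
$$V=\bigoplus_{e=0}^{r-1}\mathcal{L}(G)\,x_{i+1}^{e}.$$
\item Locality $r$ follows exactly as in the proof of Theorem \ref{teoremareversionado}. On each fibre $A$ over a place of $F_i$, every element of $\mathcal{L}(G)$ is constant, so $f\in V$ restricts to a polynomial in $x_{i+1}$ of degree at most $r-1$. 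Such a polynomial is recovered by interpolation from any $r$ of the $r+1$ points of the fibre.
\item For the dimension, the powers $1,\dots,x_{i+1}^{r-1}$ are linearly independent over $F_i$, so Riemann--Roch gives
$$k=r\,\ell(G)\ \ge\ r(\deg G-g(F_i)+1).$$
\item For the distance, $V\subset\mathcal{L}\big((r+1)G+(r-1)(x_{i+1})_\infty\big)$ in $F_{i+1}$. Hence
$$d\ \ge\ n-(r+1)\deg G-(r-1)\deg(x_{i+1})_\infty .$$
Injectivity of the evaluation map holds as soon as this lower bound is positive.
\end{enumerate}

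Next I eliminate $\deg G$. Dividing by $n$ and using $n=(r+1)N_i'$ gives
$$\delta\ \ge\ 1-\frac{\deg G}{N_i'}-\frac{(r-1)\deg(x_{i+1})_\infty}{n}
\qquad\text{and}\qquad
R\ \ge\ \frac{r}{r+1}\cdot\frac{\deg G-g(F_i)}{N_i'}.$$
Substituting the first inequality into the second yields
$$R\ \ge\ \frac{r}{r+1}\Big(1-\delta-\frac{g(F_i)}{N_i'}-\frac{(r-1)\deg(x_{i+1})_\infty}{n}\Big).$$
Letting $\deg G$ range over an interval, for instance $\deg G\approx\lambda N_i'$, produces codes for every admissible $\delta$. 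Passing to the limit $i\to\infty$ along the tower then gives a family.

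The remaining work is to show that the two error terms together are at most $3/(q+1)$ asymptotically. I will compute three quantities from the tower:
\begin{enumerate}
\item $\liminf_i N_i'/[F_i:F_0]=q-Q$, since every place over the $q-Q$ split places of $F_0$ splits completely;
\item the genus growth $g(F_i)\sim[F_i:F_0]\cdot Q$, up to lower order;
\item $\deg(x_{i+1})_\infty=[F_{i+1}:F_0]$, which is exactly the situation of the Remark following Theorem \ref{ej312}, since $P_\infty$ is totally ramified at each step.
\end{enumerate}
These give
$$\frac{g}{N'}\to\frac{Q}{q-Q}=\frac{1}{Q-1},
\qquad
\frac{(r-1)\deg(x_{i+1})_\infty}{n}\to\frac{Q-2}{q-Q}.$$

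The main obstacle is this constant. The crude estimate above gives roughly $2/Q$, not $3/(q+1)$. My guess is that the discrepancy comes from a normalization difference in Barg--Tamo--Vladut: their "$q$" in this remark may stand for the square root of the field size, or they may use a sharper fibre product or quotient curve with better genus-to-points ratio. My first attempt will be to reread Remark 6.3 of \cite{bargtamovladut} to pin down which tower and which normalization they use. I will then redo the bookkeeping so that the genus term and the pole term of $x_{i+1}$ combine into the stated $3/(q+1)$. If necessary, I will replace $x_{i+1}$ by a function of minimal pole degree in the fibre, of degree $[F_{i+1}:F_i]$ relative to $F_i$, to sharpen the second term.
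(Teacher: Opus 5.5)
The paper gives no proof of this statement. It is quoted from Remark 6.3 of Barg--Tamo--Vl\u{a}du\c{t} and used only for comparison. Your construction has the right shape: a Riemann--Roch space on $F_i$ times $1,x_{i+1},\dots,x_{i+1}^{r-1}$, on the Garcia--Stichtenoth tower over $\F_{Q^2}$. But there is a genuine gap: the proposal never reaches the displayed inequality and stops with the decisive constant unresolved.

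Finishing your own bookkeeping, with $N_i'=(Q^2-Q)Q^i$, $n=(Q^2-Q)Q^{i+1}$, $g(F_i)<Q^{i+1}$ and $\deg(x_{i+1})_\infty=Q^{i+1}$, gives
\[
R>\frac{r}{r+1}\Bigl(1-\delta-\frac{1}{Q-1}-\frac{Q-2}{Q(Q-1)}\Bigr)=\frac{r}{r+1}\Bigl(1-\delta-\frac{2}{Q}\Bigr),\qquad Q=\sqrt{q}=r+1 .
\]
If $q$ in the statement denotes the alphabet size, this does not imply the claim, and no refinement of the method can.
\begin{itemize}
\item In your derivation the Riemann--Roch step alone contributes $g(F_i)/N_i'\ge g(F_i)/N(F_i)$. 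This cost is genuine, since $\ell(G)=\deg G-g(F_i)+1$ holds with equality once $\deg G\ge 2g(F_i)-1$.
\item By Drinfeld--Vl\u{a}du\c{t}, $g(F_i)/N(F_i)$ is asymptotically at least $\frac{1}{\sqrt q-1}$.
\item For every $q$, $\frac{1}{\sqrt q-1}>\frac{3}{q+1}$, because $q-3\sqrt q+4>0$.
\end{itemize}
So replacing $x_{i+1}$ by a function of smaller pole degree is a dead end. So is looking for a curve with a better genus-to-points ratio, since this tower already attains the Drinfeld--Vl\u{a}du\c{t} bound.

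What resolves the discrepancy is the normalization you suspected, but you must commit to it and close the argument. The paper itself uses this bound with $q$ equal to the \emph{square root} of the alphabet size. It compares the bound with its own codes over $\F_{q^2}$ of locality $r=q-1$, and it plots it for $r+1=q=32$. In terms of the alphabet size $q$, that reading is $R\ge\frac{r}{r+1}\bigl(1-\delta-\frac{3}{\sqrt q+1}\bigr)$ with $r=\sqrt q-1$. Under it your estimate already suffices, because $\frac{2}{Q}\le\frac{3}{Q+1}$ for all $Q\ge 2$. This one-line comparison, together with taking $\deg G\approx\lambda N_i'$ and letting $i\to\infty$, completes the proof.

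One smaller correction. The value $\deg(x_{i+1})_\infty=Q^{i+1}$ is right, but your reason for it is not. Total ramification of $P_\infty$ does not account for it, because $x_{i+1}$ also has poles above the places of $F_i$ where $x_i\in\F_Q^*$; the denominator $x_i^{Q-1}+1$ vanishes there. The correct justification is $\deg(x_{i+1})_\infty=[F_{i+1}:\F_q(x_{i+1})]=Q^{i+1}$.
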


The relationship of the relative parameters of the codes constructed in Theorems \ref{ej312} and \ref{ej313} rely instead on the arbitrarily large size of the base field $\F_{q^2}$, but verify a stronger inequality similar as the obtained in Remark 3.6 of \cite{lrcontowers}.
\begin{pro}
For $q \geq 8$ and for each $1 \leq l \leq (q-1)(q-2)$; the relative parameters $(R, \delta)$ of the code $C_q$, constructed in Example \ref{ej313} verify
\begin{equation*} R + \frac{q-1}{q} \delta > \frac{q-1}{q} \left( \frac{q-2}{q} \right), \end{equation*}
equivalently, 
\begin{equation}\label{nuestra}
R > \frac{r}{r+1}\left(1-\delta-\frac{2}{q+1}\right).
\end{equation}
\end{pro}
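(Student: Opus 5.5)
The plan is a direct computation from the explicit parameters. I read the code $C_q$ with parameter $l$ as the one in Corollary \ref{coro38}, which refines Theorem \ref{ej313}. It has
$$n=q^2(q^2-q),\qquad k=(l+1)(q^2-q),\qquad d=q^2(q^2-l-3q+3),$$
and locality $r=q-1$, so that $\frac{r}{r+1}=\frac{q-1}{q}$. No structural argument about places is needed; everything reduces to algebra in $q$ and $l$.

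First I compute the relative parameters:
$$R=\frac{k}{n}=\frac{(l+1)(q^2-q)}{q^2(q^2-q)}=\frac{l+1}{q^2},\qquad \delta=\frac{d}{n}=\frac{q^2-l-3q+3}{q^2-q}.$$
Multiplying, $\frac{q-1}{q}\,\delta=\frac{q^2-l-3q+3}{q^2}$. Adding $R$, the parameter $l$ cancels:
$$R+\frac{q-1}{q}\,\delta=\frac{q^2-3q+4}{q^2}.$$
Since $\frac{q-1}{q}\cdot\frac{q-2}{q}=\frac{q^2-3q+2}{q^2}$, the first inequality holds with margin exactly $\frac{2}{q^2}>0$. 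This is true for every admissible $l$, so the stated range of $l$ plays no role beyond guaranteeing that the code exists.

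For the second form, I rewrite the identity just obtained as
$$R=\frac{q-1}{q}\Big(1-\delta-\frac{2}{q}\Big)+\frac{2}{q^2}.$$
It then suffices to show
$$-\frac{2(q-1)}{q^2}+\frac{2}{q^2}>-\frac{2(q-1)}{q(q+1)}.$$
This is equivalent to $\frac{q-2}{q^2}<\frac{q-1}{q(q+1)}$, that is, $(q-2)(q+1)<q(q-1)$, which reduces to $-2<0$.

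The main obstacle is the word ``equivalently'' in the statement. Inequality \eqref{nuestra}, with $\frac{2}{q+1}$, is slightly stronger than the literal rearrangement of the first inequality, which would give $\frac{2}{q}$. So \eqref{nuestra} does not follow from the first inequality alone. It follows because the exact slack $\frac{2}{q^2}$ absorbs the difference between the two, and I would state this explicitly rather than call the two forms equivalent. I would also check that the parameters being used are the exact values $k$ and $d$ from Corollary \ref{coro38}, not merely the bound on $d$. That is harmless here, since a larger true $d$ only increases $\delta$ and strengthens both inequalities.
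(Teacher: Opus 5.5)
Your proof is correct and is the same direct computation the paper uses. The paper plugs in the parameters of Theorems~\ref{ej312} and~\ref{ej313} (the cases $l=\frac{q^2}{2}-q$ and $l=\frac{q^2}{2}$ of Corollary~\ref{coro38}) to get $R+\frac{q-1}{q}\delta=\frac{q^2-3q+4}{q^2}$, whereas you do it for general $l$ and observe that $l$ cancels. For $\frac{q^2}{2}<l\le(q-1)(q-2)$ the corollary does not apply, but the designed-distance bound of Theorem~\ref{teoremareversionado} together with your monotonicity remark suffices.

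Your treatment of the second form is more careful than the paper's bare ``equivalently''. The version with $\frac{2}{q+1}$ is strictly stronger than the rearranged first inequality, which has $\frac{2}{q}$. Your check that the exact slack $\frac{2}{q^2}$ absorbs the difference, reducing to $(q-2)(q+1)<q(q-1)$, is what actually justifies it.
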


\begin{proof}

In the first examples we obtained $\delta = \dfrac{d}{n} = \dfrac{ \left(q^2/2 - 2q+3 \right) }{ q^2-q}$ and $R =\dfrac{k}{n} = \dfrac{ \left( q^2/2 -q +1 \right)}{q^2}.$

For the second, we obtained $\delta = \dfrac{d}{n} = \dfrac{\left(q^2 / 2  - 3q+3 \right)}{q^2-q}$ and $R = \dfrac{k}{n} = \dfrac{ \left( q^2/2 + 1 \right)}{q^2}$ .
Consequently, it verifies
\begin{align*} R + \frac{q-1}{q} \delta & =  \frac{q^2 - 3q + 4}{q^2}  >  \frac{q^2 - 3q + 2}{q^2} \\
& = \frac{ (q-1)(q-2)}{q^2} = \frac{q-1}{q} \left(  \frac{q-2}{q} \right) .\qedhere\end{align*}  \end{proof}

Finally, the relative parameters of the codes of Examples \ref{ej312} and \ref{ej313} also lie above the GV-bound.

\begin{pro}[Barg-Tamo] 
For sufficiently large $q$ and for certain $2 \leq i \leq q-1$ and $1 \leq l \leq (q-1)(q-i)$; we do have that the lower bound for the relative parameters $(\delta,R)$ of $C_i(S,D)$ with $D = l P_{\infty}$ do improve the bound described in \cite[Thm 6.1]{bargtamovladut}.
That is, for adequate $i,l$, we have that $(\delta, R)$ lies above the curve
	\begin{equation}\label{GVbound} R = \frac{r}{r+1} - \min_{0 < s \leq 1} \left\{	\frac{1}{r+1} \log_qb_2(s) -\delta \log_q(s)\right\}
\end{equation}
where 
$b_2(s) = \frac{1}{q} \left((1 + (q-1)s)^{r+1}+ (q-1)(1-s)^{r+1} \right).$
\end{pro}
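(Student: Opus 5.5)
Write $r=q-1$ throughout, and take $(\delta,R)$ to be the relative parameters of $C_i(S,D)$ with $D=lP_\infty$, where $S$ is the set of places of $F_0$ that split completely in $\mathcal{W}$ (Section~\ref{primera}). The plan is to bound $R$ from below by the parameters guaranteed by Theorem~\ref{teoremareversionado} and compare with the Barg--Tamo--Vl\u{a}du\c{t} curve (\ref{GVbound}) by a single test value of $s$.

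\textbf{Lower bounds for $(\delta,R)$.} By Theorem~\ref{teoremareversionado} applied along the tower $\mathcal{W}$:
\begin{itemize}
\item the length is $n=q^{i-1}(q^2-q)$;
\item $\delta \geq 1-\frac{l}{n}\deg(x_0)-\dots$;
\item $R=\frac{k}{n}$ with $k=\frac{r}{r+1}\cdot(\text{dimension of the part of } \mathcal{L}(lP_\infty) \text{ used})$.
\end{itemize}
Using $\deg(x_j)=q^{i}$ and the choice $l\le (q-1)(q-i)$, this should give an inequality of the form
\begin{equation*}
R \geq \frac{r}{r+1}\left(1-\delta-\frac{c}{q}\right),
\end{equation*}
with $c$ a small constant, analogous to (\ref{nuestra}). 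So the codes lie on or above a line of slope $-\frac{r}{r+1}$ that is only $O(1/q)$ below the Singleton-type line $R=\frac{r}{r+1}(1-\delta)$.

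\textbf{Comparison with the GV-type curve.} Denote the minimum in (\ref{GVbound}) by $M(\delta)$. It suffices to show $\frac{r}{r+1}\bigl(1-\delta-\frac{c}{q}\bigr) > \frac{r}{r+1}-M(\delta)$ for some admissible $\delta$. Since $M(\delta)$ is a minimum, any fixed $s$ gives
\begin{equation*}
M(\delta)\le \frac{1}{r+1}\log_q b_2(s)-\delta\log_q s ,
\end{equation*}
so it is enough to exhibit one $s$ with
\begin{equation*}
\frac{r}{r+1}\delta+\frac{rc}{(r+1)q} < -\frac{1}{r+1}\log_q b_2(s)+\delta\log_q s .
\end{equation*}
Expanding $b_2(s)$ with $r+1=q$ gives $b_2(s)=\frac{1}{q}\bigl((1+(q-1)s)^{q}+(q-1)(1-s)^{q}\bigr)$.

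\textbf{Choice of $s$.} I would take $s=q^{-a}$ for suitable $a>0$, for example $a$ near $1$. Then $(1+(q-1)s)^q$ is roughly $e^{q^{2-a}}$ or $2^q$, so $\frac{1}{q}\log_q b_2(s)$ is roughly $\frac{1}{\ln q}$ when $a=1$. Meanwhile $\delta\log_q s=-a\delta$. The required inequality then becomes roughly
\begin{equation*}
\delta(1-a) > \frac{c'}{\log q}
\end{equation*}
after sign bookkeeping. This fails for $a=1$, so $a$ has to be tuned, possibly depending on $\delta$. This is the main obstacle. The gap between the AG-type line and the GV-type curve is of order $1/\log q$, whereas our deficit is of order $1/q$, so a gap should exist. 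The difficulty is choosing $s$ and $\delta$ (equivalently $i$ and $l$) so that the estimates are rigorous for sufficiently large $q$.

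\textbf{Fallback.} If this direct estimate does not close, I would instead evaluate the minimizer asymptotically by setting the derivative in $s$ to zero. This gives $\delta\approx\frac{(q-1)s\,[(1+(q-1)s)^{q-1}-(1-s)^{q-1}]}{q\,b_2(s)}$. I would then use it to show that $\frac{r}{r+1}-M(\delta)\le\frac{r}{r+1}(1-\delta)-\Omega(1/\log q)$ for $\delta$ in a fixed middle range such as $[1/4,1/2]$. Finally, I would pick $i,l$ so that the code's $\delta$ falls in that range, for instance $i=2$ and $l$ around $q^2/2$ as in Theorem~\ref{ej313}. For $q$ large, the $O(1/q)$ deficit of the code is then beaten by the $\Omega(1/\log q)$ gap.
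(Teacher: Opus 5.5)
Your comparison step runs in the wrong direction, so the argument does not go through as written. To place $(\delta,R)$ above the curve \eqref{GVbound} you need $R>\frac{r}{r+1}-M(\delta)$, that is, a \emph{lower} bound $M(\delta)>\frac{r}{r+1}-R$ on the minimum. A lower bound on a minimum has to hold for \emph{every} $s\in(0,1]$.

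Evaluating at one test value only gives $M(\delta)\le f(s)$, where $f(s)=\frac{1}{r+1}\log_q b_2(s)-\delta\log_q s$. That bounds the curve from below, which does not help here.

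Moreover, the sufficient condition you display, $\frac{r}{r+1}\delta+\frac{rc}{(r+1)q}<-f(s)$, can never hold. Its left side is positive, while $-f(s)\le 0$: indeed $b_2(0)=1$ and $b_2'(s)=\frac{(r+1)(q-1)}{q}\bigl[(1+(q-1)s)^{r}-(1-s)^{r}\bigr]\ge 0$, and $-\delta\log_q s\ge 0$, so $f\ge 0$ on $(0,1]$. The ``main obstacle'' of tuning $s=q^{-a}$ comes from this reversal; no choice of $s$ can work.

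Your fallback aims at the right inequality, $\frac{r}{r+1}-M(\delta)\le\frac{r}{r+1}(1-\delta)-\Omega(1/\log q)$. However, it is only announced, and locating an approximate stationary point does not by itself control $f$ on all of $(0,1]$.

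The missing uniform bound is elementary. Recall $r+1=q$, drop the nonnegative second term to get $b_2(s)\ge\frac1q(1+(q-1)s)^{q}$, and put $u=(q-1)s$. Then for all $0<s\le1$
\begin{equation*}
f(s)\;\ge\;\frac{\ln(1+u)-\delta\ln u}{\ln q}-\frac1q+\delta\log_q(q-1)\;\ge\;\frac{h(\delta)}{\ln q}-\frac1q+\frac{q-1}{q}\,\delta .
\end{equation*}
Here $h(\delta)=-\delta\ln\delta-(1-\delta)\ln(1-\delta)$ is the minimum of $\ln(1+u)-\delta\ln u$ over $u>0$, attained at $u=\delta/(1-\delta)$, and $\log_q(q-1)\ge\frac{q-1}{q}$ for $q\ge4$. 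Hence the curve lies below $\frac{r}{r+1}(1-\delta)-\frac{h(\delta)}{\ln q}+\frac1q$.

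On the code side you do not need an unspecified constant $c$. For $i=2$, the code of Theorem~\ref{ej313} satisfies exactly $R=\frac{r}{r+1}(1-\delta)-\frac{2q-4}{q^2}$ with $\delta\to\frac12$; this is the computation behind \eqref{nuestra}. For general level $i$, Theorem~\ref{teoremareversionado} gives $R=\frac{r}{r+1}(1-\delta)-\frac{i(q-1)-2}{q^2}$, with $\delta$ the designed relative distance. So your $c$ is roughly $i$, which is harmless only for $i$ small compared with $q/\ln q$.

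The code therefore lies above the curve as soon as $\frac{h(\delta)}{\ln q}>\frac3q$. This holds for all large $q$, and already at $q=32$. It is the rigorous form of your ``$1/\log q$ versus $1/q$'' heuristic.

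For comparison, the paper gives no written proof of this proposition, only the numerical evidence for $q=32$ in Figure~\ref{ejemploq32}. Note also that these codes are over $\F_{q^2}$. If the curve is taken with alphabet size $q^2$, the same argument yields $\frac{h(\delta)}{2\ln q}$ instead of $\frac{h(\delta)}{\ln q}$, which is still enough for large $q$.
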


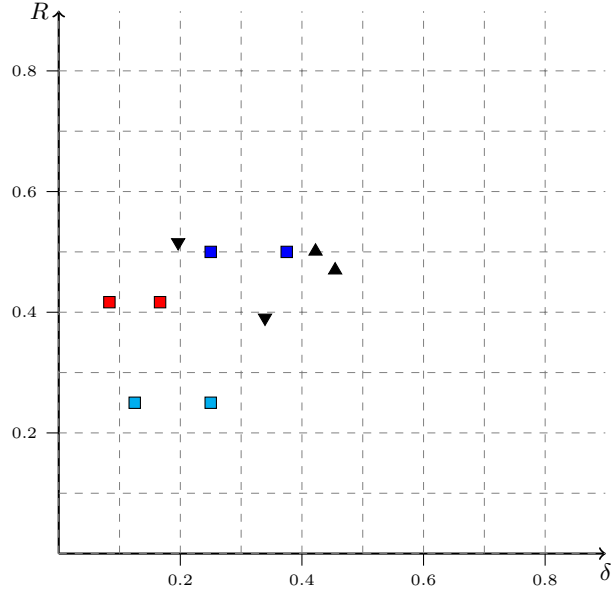
\begin{figure}[h]
\begin{tikzpicture}[scale = 8]
    \draw[thick,->] (0,0) -- (0.9, 0) node[anchor=north] {\small $\delta$};
    \draw[thick,->] (0,0) -- (0, 0.9) node[anchor=east] {\small $R$};
    \draw[step = 0.1, gray, dashed, very thin] (0,0) grid (0.899,0.899);
    \foreach \x in {0.2, 0.4, 0.6, 0.8} 
        \draw (\x,0) -- (\x,-0.02) node[below] {\tiny \x};
    \foreach \y in {0.2, 0.4, 0.6, 0.8} 
        \draw (0,\y) -- (-0.02,\y) node[left] {\tiny \y};
    
        \foreach \x/\y in {0.25/0.25,  0.125/0.25} {
        \node[draw=black, fill=cyan, shape=rectangle, minimum size=0.15cm, inner sep=0pt] at (\x,\y) {};}
       
        \foreach \x/\y in {0.25/0.5,  0.375/0.5} {
        \node[draw=black, fill=blue, shape=rectangle, minimum size=0.15cm, inner sep=0pt] at (\x,\y) {};}
        
        \foreach \x/\y in {0.16666666666666667/0.41666666666666667,  0.08333333333333333/0.41666666666666667} {
        \node[draw=black, fill=red, shape=rectangle, minimum size=0.15cm, inner sep=0pt] at (\x,\y) {};}
        
        \foreach \x/\y in {0.3392857142857143/0.390625, 0.19642857142857142/0.515625} {
        \node[draw=black, fill=black, regular polygon, regular polygon sides=3, minimum size=0.2cm, inner sep=0pt, rotate=180] at (\x, \y) {};}
        
        \foreach \x/\y in {0.45463709677419356 /  0.4697265625, 0.4223790322580645  / 0.5009765625} {
        \node[draw=black, fill=black, regular polygon, regular polygon sides=3, minimum size=0.2cm, inner sep=0pt, rotate=0] at (\x, \y) {};}
        
\end{tikzpicture}
 \caption{Comparison of some examples in the table \ref{tabla}} \label{ejemplotabla}
\end{figure}

\begin{figure}[ht]
\includegraphics[scale=0.4]{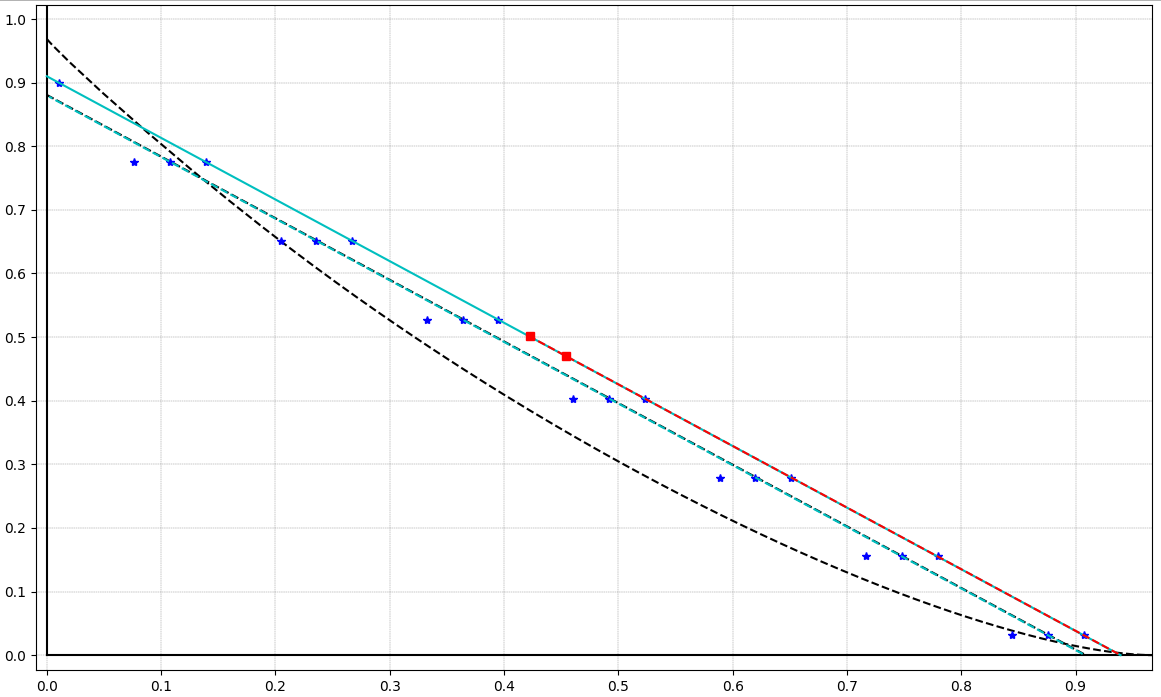}
\caption{Bounds \eqref{BTVbound} and \eqref{GVbound} for $r+1=q=32$ are shown as dashed curves in black.
For $2 \leq i \leq 4$, there are plotted lower bounds \textcolor{blue}{$C_2(\mathcal{B},V)$} as points in blue.
Ranges of constructible examples of  \textcolor{blue}{$C_3(\mathcal{B},V)$ obtained from Theorem \ref{teoremareversionado}} are shown as a dashed line in cyan, which lies strictly \textcolor{blue}{below} the $BTF$ bounds.
 The parameters of \textcolor{blue}{$C_2(\mathcal{B},V)$} from \ref{ej312} and \ref{ej313} are marked as red squares, lying on the line obtained from \eqref{nuestra}, \textcolor{blue}{and the parameters of  $C_2(\mathcal{B},V)$ derived from Corollary \ref{coro38} are shown as a dashed line in red}.
} \label{ejemploq32}
\end{figure}

\end{document}